
\documentclass[12pt]{article}
\usepackage{amsfonts}
\usepackage{amsmath}
\usepackage{amssymb}
\usepackage{xcolor}

\setcounter{MaxMatrixCols}{10}

\newtheorem{theorem}{Theorem}

\newtheorem{definition}[theorem]{Definition}

\newtheorem{lemma}[theorem]{Lemma}

\newenvironment{proof}[1][Proof]{\noindent\textbf{#1.} }{\ \rule{0.5em}{0.5em}}
\setlength{\textwidth}{5.5in} 
\setlength{\textheight}{8in} 
\setlength{\oddsidemargin}{0.5in}
\setlength{\topmargin}{0in}
\begin{document}

\title{Uniformly Strict Equilibrium for Repeated Games with Private
Monitoring and Communication\thanks{%
We thank George Mailath for very helpful discussions and the audience in
numerous conferences and seminars for helpful comments. Postlewaite
gratefully acknowledges support from National Science Foundation grant
SBR-9810693.}}
\author{Richard McLean \\
Rutgers University \and Ichiro Obara \\
UCLA \\
\and Andrew Postlewaite \\
University of Pennsylvania}
\date{February 5, 2024}
\maketitle

\begin{abstract}
Cooperation through repetition is an important theme in game theory. In this
regard, various celebrated ``folk theorems'' have been proposed for repeated
games in increasingly more complex environments. There has, however, been
insufficient attention paid to the robustness of a large set of equilibria
that is needed for such folk theorems. Starting with perfect public
equilibrium as our starting point, we study uniformly strict equilibria in
repeated games with private monitoring and direct communication (cheap
talk). We characterize the limit equilibrium payoff set and identify the
conditions for the folk theorem to hold with uniformly strict equilibrium.

Keywords:\ Cheap talk, Communication, Folk theorem, Private monitoring,
Repeated games, Robustness, Strict equilibrium

JEL\ Classifications:\ C72, C73, D82
\end{abstract}

\pagebreak

\section{Introduction}

Cooperation through repetition is an important theme in game theory. In this
regard, various celebrated ``folk theorems'' have been proposed for repeated
games in increasingly more complex environments. There has, however, been
insufficient attention paid to the robustness of a large set of equilibria
that is needed for such folk theorems.

In this paper, we study \emph{uniformly strict equilibria} in repeated games
with private monitoring and direct communication (cheap talk). Our starting
point is perfect public equilibrium (PPE) (\cite{flm}). In each period,
players take actions simultaneously, observe private signals, and send
public messages simultaneously. A perfect public equilibrium is a profile of
public strategies that specifies a Nash equilibrium as their continuation
play after every public history (a sequence of past message profiles). We
impose strict incentives at every public history by requiring that, in each
period, a player would incur a positive payoff loss (in terms of the value
at the period) when deviating in either action or message from the
equilibrium strategy. We also require such payoff losses from a unilateral
deviation to be uniformly bounded away from $0$ across all public histories.

It is well known that strict equilibrium has desirable robustness
properties. For example, strict equilibria survive most equilibrium
refinements in strategic form games. In our setting of infinitely repeated
games, our uniform strictness requirement is a natural strengthening of
strict equilibrium.

We present two main results. Our first result is a characterization of the
limit set of uniformly strict perfect public equilibrium payoffs via a
collection of static programming problems. We follow the approach of
Fudenberg and Levine \cite{fl} (henceforth FL) to characterize the limit
equilibrium payoff set. It also builds on other classic results from Abreu,
Pearce and Stacchetti \cite{aps} and Fudenberg, Levine and Maskin \cite{flm}%
. We adapt their ideas to our model and generalize them by introducing
uniformly strict incentives. In our second result, we establish a folk
theorem by identifying conditions ensuring that this limit set coincides
with the set of feasible and individually rational payoffs generated by the
data of the underlying stage game.

There is a large literature dealing with folk theorems for repeated games
with varying assumptions regarding public or private monitoring with or
without communication. Most relevant to our paper are the various folk
theorems for repeated games with private monitoring and communication (\cite%
{ao}, \cite{bpk}, \cite{c}, \cite{fl2}, \cite{km}, \cite{l}, \cite{l2}, \cite%
{o}, \cite{t}).\footnote{%
There is an extensive literature on folk theorems for repeated games with
private monitoring and without communication, including \cite{bo}, \cite{ev}
, \cite{ho}, \cite{mm1}, \cite{mo}, \cite{m}, \cite{p}, \cite{s}, \cite{su}.
They usually rely on non-strict equilibrium (such as belief-free
equilibrium) to establish the folk theorem.}

Our detectability and identifiability conditions for the folk theorem are similar to and weaker
than the conditions (A2) and (A3) in Kandori and Matsushima \cite{km}. (A2)
and (A3) imply that, for any pair of players, their deviations are
detectable and identifiable (i.e. one player's deviation can be
statistically distinguished by the other player's deviation) based on the
private signals of the other $n-2$ players. Our detectability and
identifiability condition instead impose a similar restriction on the joint
distributions of the messages of all players. Their conditions allow each player's future payoff independent of her message. This indifference makes truth-telling incentive compatible for each player. On the other hand, we require uniform strictness of incentive for sending any (nontrivial) message.\footnote{\cite{km}
also discusses a way to provide strict incentive for truth-telling via a
scoring rule, but the strict incentive vanishes in the limit for the minmax points for their folk
theorem (Theorem 2). We instead fix the level of the required strict incentive
first, then prove the folk theorem by letting $\delta \rightarrow 0$.}

Our conditions are also similar to the sufficient
conditions in Tomala \cite{t}, but Tomala studies a type of perfect
equilibrium with mediated communication, which is more flexible than cheap
talk, and does not impose strict incentive constraints. As a consequence,
the conditions for the folk theorem in Tomala \cite{t} are weaker than ours.

\section{Preliminaries}

\subsection{Repeated Games with Private Monitoring and Communication}

\textbf{Stage Game}

\vspace{3mm}

We present the model of repeated games with private monitoring and
communication. The set of players is $N=\left\{ 1,...,n\right\} $. The game
proceeds in stages and in each stage $t$, player $i$ chooses an action from
a finite set $A_{i}$. \ An action profile is denoted by $a=\left(
a_{1},...,a_{n}\right) \in \Pi _{i}A_{i}:=A.$ Stage game payoffs are given
by $g:A \rightarrow \mathbb{R}^n$. We denote the resulting stage game by $%
G=\left(N, A, g\right) .$ Actions are not publicly observable. Instead, each
player $i$ observes a private signal $s_{i}$ from a finite set $S_{i}$. A
private signal profile is denoted $s=\left( s_{1},..,s_{n}\right) \in \Pi
_{i}S_{i}:=S.$ For each $a\in A,$ $p\left( \cdot |a\right) \in \Delta (S)$
is the distribution on $S$ given action profile $a.$ We assume that the
marginal distributions have full support, that is, $\sum_{s_{-i}}p\left(
s_{i},s_{-i}|a\right) >0$ for all $s_{i}\in S_{i}$, $a\in A$ and $i\in N$.

Players communicate publicly each period. Player $i$ sends a public message $%
m_{i}$ from a finite set $M_{i}$ after observing a private signal $s_{i}$ in
each period.\footnote{%
We can support the largest set of equilibria by using $M_{i}=S_{i}$. But we
use a more general message space $M_{i}$ to allow for the possibility of
restricted message spaces.} Player $i$'s \emph{message strategy} $\rho
_{i}:S_{i}\rightarrow M_{i}$ in the stage game is a mapping from private
signals to public messages. Let $R_{i}$ be the set of player $i$'s message
strategies. An action profile $a\in A$ and a profile of message strategies $%
\rho =(\rho _{1},...,\rho _{n})\in \Pi _{i}R_{i}:=R$ generates a
distribution $\tilde{p}(\cdot |(a,\rho ))$ over public messages $M=\Pi
_{i}M_{i}$, where 
\begin{equation*}
\tilde{p}(m|(a,\rho )):=\sum_{\substack{ s\in S:\rho
_{i}(s_{i})=m_{i},\forall i}}p(s|a).
\end{equation*}

We normalize payoffs so that each player's \textit{pure strategy} minmax
payoff is $0$ in the stage game. The pure strategy minmax payoff is the
relevant payoff lower bound for our folk theorem because we study
equilibrium with strict incentives and without any mediator. Note that the
pure strategy minmax payoff may be strictly larger than the mixed minmax
payoff. The set of feasible payoff profiles is $V(G)=co\left\{ g\left(
a\right) |a\in A\right\} $. Let $A(G)\subseteq A$ be the set of action
profiles that generate an extreme point in $V(G)$. Finally $V^{\ast
}(G)=\left\{ v\in V(G)|v\geq 0\right\} $ is the set of feasible,
individually rational payoff profiles.

\vspace{3mm}

\textbf{Repeated Game with Public Communication}

\vspace{3mm}

In the repeated game, play proceeds in the following way. At the beginning
of period $t\geq 1$, player $i$ chooses an action contingent on $\left(
h_{i}^{t},h^{t}\right)$, where $h_{i}^{t}\in H_{i}^{t}=A_{i}^{t-1}\times
S_{i}^{t-1}$ is player $i$'s private history that consists of her private
actions and private signals and $h^{t}\in H^{t}=M^{t-1}$ is the public
history of message profiles.\footnote{%
We define $H^{1}=H^{1}_i = \{\varnothing \}$ for all $i \in N$.} Player $i$
also chooses a message strategy $\rho _{i} \in R_i$ contingent on $\left(
h_{i}^{t},h^{t},a_{i}\right)$. Then player $i$'s pure strategy $\sigma
_{i}=(\sigma _{i}^{a},\sigma _{i}^{m})$ consists of an \textquotedblleft
action\textquotedblright\ component $\sigma
_{i}^{a}:\bigcup_{t}[H_{i}^{t}\times H^{t}]\longrightarrow A_{i}$ and a
\textquotedblleft message\textquotedblright\ component $\sigma
_{i}^{m}:\bigcup_{t}[H_{i}^{t}\times H^{t}\times A_{i}]\longrightarrow R_{i}$%
.

A strategy $\sigma _{i}$ is a public strategy if in any period $t$ both $%
\sigma _{i}^{a}(h_{i}^{t},h^{t})$ and $\sigma
_{i}^{m}(h_{i}^{t},h^{t},a_{i}) $ are independent of private history 
$h_i^{t}\in H_i^{t}$. For the sake of simple exposition, we drop $h_{i}^{t}$ from any public strategies. 
We denote player $i$'s action and on-path message strategy at $h^t$ for public strategy $\sigma_i$ by $\sigma_i(h^t) = (\sigma _{i}^{a}(h^{t}), \sigma _{i}^{m}(h^{t}, \sigma _{i}^{a}(h^{t}))) \in A _i \times R_i$.
A pure strategy profile $\sigma $ induces a probability measure on $%
A^{\infty } $. Player $i$'s discounted average payoff given a profile of
strategies $\sigma =(\sigma _{1},...,\sigma _{n})$ is $(1-\delta
)E[\sum_{t=1}^{\infty }\delta ^{t-1}g_{i}(\tilde{a}^{t})|\sigma ]$, where
the expectation is taken with respect to this measure.

\subsection{Uniformly Strict Perfect Public Equilibrium}

A profile of public strategies $\sigma $ is a perfect public equilibrium if
its continuation strategies constitute a Nash equilibrium after every public
history(\cite{flm}). In this paper, we impose an additional robustness
requirement by requiring uniformly strict incentive compatibility at every
public history. Let $w^{\sigma }(h^{t})$ be a profile of discounted average
continuation payoffs at public history $h^{t}\in H$ given a public strategy
profile $\sigma $. Given $\sigma $ and $h^{t}$, let $\Sigma _{i}^{\sigma
,h^{t}}$ be the set of deviations $(a_{i}^{\prime },\rho _{i}^{\prime })\in
A_{i}\times R_{i}$ such that $a_{i}^{\prime }\neq \sigma _{i}^{a}(h^{t})$ or 
$i$'s unilateral deviation from $\sigma _{i}(h^{t})$ to $(a_{i}^{\prime
},\rho _{i}^{\prime })$ changes the distribution of continuation payoff
profiles $w^{\sigma }(h^{t},\cdot )$ from period $t+1$. We call such
one-shot deviations \emph{nontrivial deviations} at $h^{t}$ with respect to $%
w^{\sigma }$. Any other one-shot deviation is called a \emph{trivial
deviation}, as it does not change any outcome in the current period and in
the future at all.

We define \emph{$\eta$-uniformly strict perfect public equilibrium} ($\eta$%
-USPPE) as follows.

\begin{definition}
{($\eta$-USPPE)} A profile of public strategies $\sigma$ is an $\eta$
-uniformly strict perfect public equilibrium for $\eta \geq 0$ if the
following conditions are satisfied for any $h^{t} \in H$ for any $%
(a_{i}^{\prime },\rho _{i}^{\prime }) \in \Sigma_i^{\sigma, h^t}$ and any $i
\in N$, \newline
\begin{align*}
& g_{i}\left( \sigma ^{a}\left( h^{t}\right) \right) +\frac{\delta }{%
1-\delta }\sum_{m\in M}w_{i}^{\sigma }\left( h^{t},m\right) \tilde{p}\left(
m|\sigma \left( h^{t}\right) \right) -\eta \geq \\
& g_{i}\left( a_{i}^{\prime },\sigma _{-i}^{a}\left( h^{t}\right) \right) +%
\frac{\delta }{1-\delta }\sum_{m\in M}w_{i}^{\sigma }(h^{t},m)\tilde{p}%
\left( m |(a_{i}^{\prime },\rho _{i}^{\prime }),\sigma _{-i}\left(
h^{t}\right) \right),
\end{align*}
\end{definition}

This condition means that player $i$ would lose at least $\eta$ at any
public history if she makes any nontrivial deviation.\footnote{%
The incentive constraints for trivial deviations are satisfied by definition.%
}$^{,\thinspace}$\footnote{%
Another possible formulation of uniformly strict equilibrium would be to
require such $\eta$-strict incentive uniformly across all the information
sets, including the interim stages after observing a private signal.}

This definition just checks the one-shot deviation
constraints at each public history, but all the incentive constraints for
the continuation game after each public history are satisfied, because the
one-shot deviation principle holds.

Note that $\eta$-USPPE $\sigma$ may assign a suboptimal message off-path, i.e. $\sigma_i^m(h^t,a^\prime_i)$ may not be an optimal message strategy when $a_i^\prime \neq \sigma_i^a(h^t)$, since it is just a Nash equilibrium. But we can replace them with an optimal message to obtain a sequential equilibrium that is realization equivalent to $\sigma$, because other players never learn about player $i$'s deviation to $a_i^\prime$ due to the full support assumption.\footnote{Note that we do not require any strict incentive for such off-path message strategies and any trivial deviation from the on-path messages, as they do not affect any player's incentive or payoff at all.}

As an example of $\eta $-USPPE, consider any stage game with an $\eta $%
-strict Nash equilibrium. Then repeating this $\eta $-strict Nash
equilibrium and sending some message independent of histories is an $\eta $%
-uniformly strict PPE.\footnote{%
Note that any deviation in message after playing the Nash equilibrium is a
trivial deviation for this strategy profile.} In the following, let $E^{\eta
}(\delta )\subset \mathbb{R}^{n}$ denote the set of all $\eta $-USPPE payoff
profiles given $\delta $. In general, $\eta $-USPPE may not exist, hence $%
E^{\eta }(\delta )$ may be an empty set. 
The equilibrium payoff set for the standard PPE is compact, but the compactness of $
E^{\eta }(\delta )$ may not be entirely obvious because the set of nontrivial deviations at each public history depends on the continuation payoff profile. However, we can show that $E^\eta(\delta)$ is compact.

\begin{lemma}
\label{compact} $E^\eta(\delta)$ is compact.
\end{lemma}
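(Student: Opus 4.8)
The goal is to show $E^\eta(\delta)$ is closed and bounded. Boundedness is immediate: every $\eta$-USPPE payoff profile lies in $V(G)$, which is compact. So the plan is to show $E^\eta(\delta)$ is closed, i.e.\ that the limit $v$ of a convergent sequence $(v^k) \subset E^\eta(\delta)$ is again in $E^\eta(\delta)$. For each $k$, let $\sigma^k$ be an $\eta$-USPPE with payoff $v^k$. The natural idea is a diagonal/compactness argument: the space of public strategy profiles, viewed as an element of a countable product (over public histories $h^t \in H$, which is a countable set) of the finite-dimensional compact sets $\Delta(A_i \times R_i)$ — or, since we only need pure strategies, the \emph{finite} sets $A_i \times R_i$ — is itself compact (a countable product of compact metric spaces, hence sequentially compact; in the pure case it is even a countable product of finite sets). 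Pass to a subsequence along which $\sigma^k$ converges pointwise to some public strategy profile $\sigma^*$: that is, for each fixed $h^t$, $\sigma^k(h^t)$ is eventually constant and equal to $\sigma^*(h^t)$ (finite range). Standard continuity of discounted payoffs in the product topology gives $w^{\sigma^k}(h^t) \to w^{\sigma^*}(h^t)$ for every $h^t$, and in particular the payoff of $\sigma^*$ is $v$.

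It remains to check that $\sigma^*$ is an $\eta$-USPPE, i.e.\ that the inequality in the definition holds at every $h^t$ for every $(a_i', \rho_i') \in \Sigma_i^{\sigma^*, h^t}$. Fix $h^t$ and $i$. Along the convergent subsequence, $\sigma^k(h^\tau) = \sigma^*(h^\tau)$ for all $h^\tau$ in the (countable) history tree once $k$ is large enough to have stabilized on the relevant coordinates — but here is the subtlety: the inequality at $h^t$ involves $w_i^{\sigma}(h^t, m)$ for \emph{all} continuation histories, which depend on infinitely many coordinates, so pointwise convergence at finitely many nodes is not enough. Instead use: for each fixed $h^t$, $w^{\sigma^k}(h^t,\cdot) \to w^{\sigma^*}(h^t,\cdot)$ (coordinatewise limit of continuation payoffs, by the payoff-continuity fact above), the stage payoffs $g_i(\sigma^a(h^t))$ and $g_i(a_i', \sigma_{-i}^a(h^t))$ converge because $\sigma^k(h^t)$ stabilizes, and $\tilde p(m \mid \cdot)$ is continuous. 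Hence both sides of the $\eta$-USPPE inequality for $\sigma^k$ at $h^t$, evaluated at the fixed deviation $(a_i', \rho_i')$, converge to the corresponding sides for $\sigma^*$. Since each $\sigma^k$ satisfies the inequality whenever $(a_i',\rho_i') \in \Sigma_i^{\sigma^k, h^t}$, the limit inequality holds for $\sigma^*$ \emph{provided} $(a_i', \rho_i') \in \Sigma_i^{\sigma^k, h^t}$ for infinitely many $k$.

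This last provision is the main obstacle, and it is exactly the issue flagged in the text: the set of nontrivial deviations $\Sigma_i^{\sigma, h^t}$ depends on $w^\sigma$, so it can jump as $k \to \infty$. The resolution is to handle the two ways a deviation can be nontrivial separately. If $a_i' \neq \sigma_i^{*a}(h^t)$, then since $\sigma^k(h^t) = \sigma^*(h^t)$ for all large $k$ (finite range), we also have $a_i' \neq \sigma_i^{k,a}(h^t)$, so $(a_i',\rho_i') \in \Sigma_i^{\sigma^k,h^t}$ for all large $k$ and we are done. If instead $a_i' = \sigma_i^{*a}(h^t)$ but the deviation changes the distribution of continuation payoff profiles $w^{\sigma^*}(h^t,\cdot)$ under $\sigma^*$, then there is some $m$ with $\tilde p(m \mid (a_i',\rho_i'),\sigma^*_{-i}(h^t)) \neq \tilde p(m \mid \sigma^*(h^t))$ and $w^{\sigma^*}(h^t,m)$ "mattering"; but note the \emph{message-induced} distribution on $M$ depends only on $\sigma^*(h^t)$ and $(a_i',\rho_i')$, which are stable, so the only failure mode is that $w^{\sigma^k}(h^t,m) = w^{\sigma^k}(h^t,m')$ collapses for the relevant $m, m'$ at stage $k$ even though it does not at the limit — which \emph{cannot} happen in the direction we need, since if the deviation is trivial at $\sigma^k$ for infinitely many $k$ then $\sum_m w^{\sigma^k}_i(h^t,m)[\tilde p(m\mid \text{dev}) - \tilde p(m \mid \sigma^k(h^t))] = 0$ along that subsequence, and passing to the limit shows the deviation is trivial at $\sigma^*$ too, a contradiction. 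Thus in all cases $(a_i',\rho_i') \in \Sigma_i^{\sigma^k, h^t}$ for infinitely many $k$, the inequality passes to the limit, and $\sigma^*$ is an $\eta$-USPPE, so $v \in E^\eta(\delta)$. (One should double-check whether "changes the distribution of continuation payoff profiles" is meant jointly across all $i$ or coordinatewise for $i$; the argument above works verbatim either way, using the vector $w^{\sigma^k}(h^t,\cdot) \to w^{\sigma^*}(h^t,\cdot)$.)
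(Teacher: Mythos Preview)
Your approach is valid but genuinely different from the paper's. The paper does not construct a limit strategy profile: it works only at the first period, extracting a convergent subsequence of first-period plays $(a^{k},\rho^{k})$ and continuation-payoff functions $w^{k}:M\to\mathbb{R}^{n}$, verifies the first-period $\eta$-strict inequality at the limit $(a^{\ast},\rho^{\ast},w^{\ast})$, notes that $w^{\ast}(m)\in cl(E^{\eta}(\delta))$ for each $m$, and concludes $cl(E^{\eta}(\delta))\subseteq B(\delta,cl(E^{\eta}(\delta)),\eta)$. Lemma~\ref{rec} then yields $cl(E^{\eta}(\delta))\subseteq E^{\eta}(\delta)$. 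Your direct construction avoids the forward reference to the self-decomposability machinery and is self-contained, at the price of handling the full strategy space and invoking continuity of discounted payoffs in the product topology; the paper's route is shorter precisely because it recycles the APS recursive structure.

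One step needs tightening. In your contrapositive for message-only deviations you write that triviality at $\sigma^{k}$ gives $\sum_{m} w_{i}^{\sigma^{k}}(h^{t},m)\bigl[\tilde p(m\mid\text{dev})-\tilde p(m\mid\sigma^{k}(h^{t}))\bigr]=0$, and that passing to the limit yields triviality at $\sigma^{\ast}$. But the definition of a trivial deviation concerns the \emph{distribution} of the random vector $w^{\sigma}(h^{t},\cdot)$, not merely its expectation; equal expectations in the limit do not force equal distributions. The fix is immediate: if the pushforwards of $\tilde p(\cdot\mid\text{dev})$ and $\tilde p(\cdot\mid\sigma^{\ast}(h^{t}))$ under $m\mapsto w^{\sigma^{k}}(h^{t},m)$ agree for infinitely many $k$, then testing against any bounded continuous $f:\mathbb{R}^{n}\to\mathbb{R}$ and using $w^{\sigma^{k}}(h^{t},m)\to w^{\sigma^{\ast}}(h^{t},m)$ (a finite sum over $M$) shows the limit pushforwards agree as well, i.e.\ the deviation is trivial at $\sigma^{\ast}$. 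The paper's proof needs and asserts the same implication (``if $(a_{i}',\rho_{i}')\in\hat\Sigma_{i}^{(a^{\ast},\rho^{\ast}),w^{\ast}}$ then $(a_{i}',\rho_{i}')\in\hat\Sigma_{i}^{(a^{\ast},\rho^{\ast}),w^{k}}$ for all sufficiently large $k$'') without spelling out the argument.
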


\begin{proof}
See the Appendix.
\end{proof}

\section{Characterization of Limit Equilibrium Payoff Set}

\subsection{Constructing The Bounding Set for Equilibrium Payoffs}

We characterize the limit $\eta $-USPPE payoff set in two steps. In this
subsection, we construct a compact set $Q^{\eta }$ with the property that $%
E^{\eta }(\delta )\subseteq Q^{\eta }$ for all $\delta \in (0,1).$ In the
next subsection, we show that, if $intQ^{\eta }\neq \emptyset $, then for
any $\epsilon >0$, there exists a nonempty, compact, convex set $W\subseteq
Q^{\eta }$ and $\underline{\delta }\in \left( 0,1\right) $ such that $%
W\subseteq E^{\eta }(\delta )$ for any $\delta \in \left( \underline{ \delta 
},1\right) $ and the Hausdorff distance between $W$ and $Q^{\eta }$ is less
than $\epsilon $.

Let $\Lambda =\left\{ \lambda \in \mathbb{R}^{n}|\left\Vert
\lambda\right\Vert =1\right\} $ and $e^{i}=(0,0,..,1,...,0)^{\top }\in
\Lambda$ with the $i$th coordinate equal to 1. Following the approach of
Fudenberg and Levine \cite{fl}, for each $\lambda \in \Lambda $, we consider
the following programming problem $(P^{\lambda, \eta})$.

\begin{align*}
(P^{\lambda ,\eta })\ & \sup_{v\in \mathbb{R}^{n},\ a\in
A,\ \rho \in R,\ x:M\rightarrow \mathbb{R}^{n}}\lambda \cdot v\ s.t. \\
& v=g\left( a\right) +E[x\left( \cdot \right) |(a,\rho )] \\
& g_{i}\left( a\right) +E[x_{i}\left( \cdot \right) |(a,\rho )]-\eta \geq
g_{i}\left( a_{i}^{\prime },a_{-i}\right) +E[x_{i}\left( \cdot \right)
|(a_{i}^{\prime },\rho _{i}^{\prime }),(a_{-i},\rho _{-i})] \\
& \forall (a_{i}^{\prime },\rho _{i}^{\prime })\in \hat{\Sigma}_{i}^{(a,\rho
),x}\ \forall i\in N \\
& \sum_{i}\lambda _{i}x_{i}(m)\leq 0\ \forall m\in M
\end{align*}
where $\hat{\Sigma}_{i}^{(a,\rho ),x}$ is the set of $(a_{i}^{\prime },\rho
_{i}^{\prime })\in A_{i}\times R_{i}$ such that $a_{i}^{\prime }\neq a_{i}$
or player $i$'s unilateral deviation from $(a_{i},\rho _{i})$ to $%
(a_{i}^{\prime },\rho _{i}^{\prime })$ does not change the distribution of $%
x(\cdot )$ given $(a_{-i},\rho _{-i})$. Naturally, we call such a deviation
nontrivial deviation with respect to $x$ for $(P^{\lambda ,\eta })$.

Since the value of the problem is bounded above by $\max_{a \in A} \lambda
\cdot g(a)$, it is either some finite value or $- \infty$ when there is no
feasible solution.

This programming problem is different from FL's problem in \cite{fl} in two
aspects because of our uniform strictness requirement. First, an $\eta$
-wedge is added to the incentive constraints for nontrivial deviations (with
respect to $x(\cdot )$). Secondly, we restrict attention to pure actions
because uniformly strict equilibrium must be in pure strategies by
definition. Note that this problem is independent of $\delta $ like FL's
problem.

Let $k^{\eta }(\lambda )$ denote the vale of the supremum for $(P^{\lambda
,\eta })$. Let $H^{\eta }(\lambda )=\left\{ x\in \mathbb{R}^{n}|\lambda
\cdot x\leq k^{\eta }(\lambda )\right\} $ be the half space below the
hyperplane $\lambda \cdot x=k^{\eta }(\lambda )$ if $k^{\eta }(\lambda )$ is
finite. $H^{\eta }(\lambda )$ is an empty set if $k^{\eta }(\lambda
)=-\infty $. Let $Q^{\eta }=\bigcap_{\lambda \in \Lambda }H^{\eta }(\lambda )
$. The next theorem shows that $Q^{\eta }$ is a bound of the equilibrium
payoff set given any $\eta $ and $\delta $.

\begin{theorem}
\label{bound2} For any $\eta \geq 0$ and any $\delta \in (0,1),$ $E^{\eta
}(\delta) \subseteq Q^{\eta }$.
\end{theorem}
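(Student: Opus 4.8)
The plan is to show that every $\eta$-USPPE payoff profile $v$ satisfies $\lambda \cdot v \le k^{\eta}(\lambda)$ for all $\lambda \in \Lambda$, which is exactly the statement $v \in Q^{\eta}$. This is the standard ``self-generation / decomposition'' argument of APS and FL, adapted to our pure-strategy, $\eta$-strict setting. Fix $\eta \ge 0$, $\delta \in (0,1)$, a direction $\lambda$, and an $\eta$-USPPE $\sigma$ with payoff profile $v = w^{\sigma}(\varnothing)$. The idea is to decompose $v$ using the first-period data of $\sigma$ and to observe that this decomposition is feasible for $(P^{\lambda,\eta})$ after an appropriate rescaling of the continuation payoffs.

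First I would extract the relevant objects from $\sigma$ at the root history: the action profile $a = \sigma^{a}(\varnothing)$, the message strategy profile $\rho = \sigma^{m}(\varnothing, a) \in R$, and the continuation payoffs $w^{\sigma}(\varnothing, m)$ for $m \in M$. By the recursive structure of discounted payoffs, $v = (1-\delta) g(a) + \delta \sum_{m} w^{\sigma}(\varnothing, m)\, \tilde{p}(m \mid (a,\rho))$. To match the normalization of $(P^{\lambda,\eta})$ (where $v = g(a) + E[x(\cdot)\mid(a,\rho)]$), I set $x(m) := \frac{\delta}{1-\delta}\big(w^{\sigma}(\varnothing, m) - v\big)$ for each $m \in M$; then a direct rearrangement gives $v = g(a) + E[x(\cdot)\mid(a,\rho)]$, so the adding-up (promise-keeping) constraint holds. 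The candidate feasible point for $(P^{\lambda,\eta})$ is thus $(v, a, \rho, x)$.

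Next I would verify the remaining two families of constraints. For the hyperplane constraint $\sum_i \lambda_i x_i(m) \le 0$: since $E^{\eta}(\delta)$ is compact (Lemma \ref{compact}) and $\sigma$ is an $\eta$-USPPE, each continuation payoff profile $w^{\sigma}(\varnothing,m)$ again lies in $E^{\eta}(\delta)$, hence $\lambda \cdot w^{\sigma}(\varnothing, m) \le \sup_{v' \in E^{\eta}(\delta)} \lambda \cdot v'$; but $v$ itself, being in $E^{\eta}(\delta)$, cannot exceed that supremum, and more carefully $v$ is a convex combination (weights $\tilde{p}(m\mid(a,\rho))$, after subtracting the current-period term) — here I would argue directly that $\lambda\cdot v \ge (1-\delta)\lambda\cdot g(a) + \delta\max_m \lambda\cdot w^\sigma(\varnothing,m)$ cannot in general be assumed, so instead I use the cleaner route: the definition of $k^{\eta}(\lambda)$ as a supremum over \emph{all} feasible points means it suffices to check $\sum_i\lambda_i x_i(m)\le 0$ only if it actually holds, and it may not. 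So the correct argument is the self-generation one: show $E^{\eta}(\delta) \subseteq Q^{\eta}$ by showing $E^{\eta}(\delta)$ is ``decomposable'' on the boundary. Concretely, let $v \in E^{\eta}(\delta)$ maximize $\lambda \cdot v$ over the compact set $E^{\eta}(\delta)$ — wait, $E^{\eta}(\delta)$ need not be convex, but it is compact, so the maximizer $v^{*}$ exists. For this $v^{*}$, every continuation $w^{\sigma}(\varnothing,m) \in E^{\eta}(\delta)$ satisfies $\lambda \cdot w^{\sigma}(\varnothing, m) \le \lambda \cdot v^{*}$, so $\lambda \cdot x(m) = \frac{\delta}{1-\delta}\big(\lambda \cdot w^{\sigma}(\varnothing,m) - \lambda \cdot v^{*}\big) \le 0$, giving the hyperplane constraint. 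Finally, the incentive constraints of $(P^{\lambda,\eta})$ with the $\eta$-wedge follow by substituting $x$ back into the $\eta$-USPPE inequalities from the Definition, since $w^{\sigma}(\varnothing,m) = v^{*} + \frac{1-\delta}{\delta} x(m)$ and the constant shift by $v^{*}$ cancels on both sides of each incentive inequality; one must also check that the notion of ``nontrivial deviation'' in $\hat{\Sigma}_i^{(a,\rho),x}$ coincides with that of $\Sigma_i^{\sigma,\varnothing}$, which it does because $x(\cdot)$ and $w^{\sigma}(\varnothing,\cdot)$ differ only by an additive constant, so they induce the same distributions being ``changed or not.'' Concluding, $(v^{*}, a, \rho, x)$ is feasible for $(P^{\lambda,\eta})$, whence $\lambda \cdot v^{*} \le k^{\eta}(\lambda)$, and since $v^{*}$ maximizes $\lambda \cdot v$ over $E^{\eta}(\delta)$ we get $\lambda \cdot v \le k^{\eta}(\lambda)$ for all $v \in E^{\eta}(\delta)$. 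As $\lambda$ was arbitrary, $E^{\eta}(\delta) \subseteq \bigcap_{\lambda} H^{\eta}(\lambda) = Q^{\eta}$.

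The main obstacle I anticipate is the bookkeeping around the rescaling factor $\frac{\delta}{1-\delta}$ and making sure the $\eta$-wedge transfers \emph{exactly} (not a $\delta$-scaled version of it) — this works precisely because the Definition of $\eta$-USPPE already writes the incentive loss as $\eta$ against the value $g_i(\sigma^a(h^t)) + \frac{\delta}{1-\delta}\sum_m w_i^\sigma(h^t,m)\tilde p(m\mid\sigma(h^t))$, i.e.\ in ``flow'' units, which is the same normalization as $(P^{\lambda,\eta})$; so $\eta$ passes through untouched. A secondary subtlety is confirming that the set of nontrivial deviations is preserved under the affine change of variables from $w^{\sigma}$ to $x$, and that trivial deviations (which need no incentive constraint in either formulation) match up; this is immediate from the affine relationship but should be stated. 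Compactness of $E^{\eta}(\delta)$ from Lemma \ref{compact} is used only to guarantee the maximizer $v^{*}$ exists; without it one would instead take a maximizing sequence, which also works.
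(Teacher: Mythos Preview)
Your proposal is correct and follows essentially the same route as the paper: pick a maximizer $v^{*}$ of $\lambda\cdot v$ over the compact set $E^{\eta}(\delta)$, decompose it via the first-period data $(a^{*},\rho^{*})$ and $x(m)=\frac{\delta}{1-\delta}(w^{\sigma^{*}}(m)-v^{*})$, and verify promise-keeping, the hyperplane constraint (from maximality of $v^{*}$), and the $\eta$-strict incentive constraints (preserved under the affine shift, including the match-up of nontrivial-deviation sets). Aside from the expository false start on a generic $v$, your argument and the paper's are the same.
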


\begin{proof}
If $E^{\eta }(\delta )=\varnothing$, then $E^{\eta }(\delta )\subseteq
Q^{\eta }$ is trivially true. So suppose that $E^{\eta }(\delta )\neq
\varnothing$ and recall that $E^{\eta }(\delta )$ is a nonempty compact set
by Lemma \ref{compact}. Fix any $\eta \geq 0$ and pick any $\lambda \in
\Lambda $. Let $v^{\ast }$ be the $\eta $-uniformly strict PPE payoff
profile that solves $\max_{v\in E^{\eta }(\delta )}\lambda \cdot v$. Let $%
\sigma ^{\ast }$ be the equilibrium strategy profile to achieve $v^{\ast }$
and $(a^{\ast },\rho ^{\ast })\in A\times R$ be the equilibrium action
profile and the message strategy profile in the first period. Since $\sigma
^{\ast }$ is an $\eta $-USPPE, it must satisfy the following conditions: 
\begin{align*}
& g_{i}\left( a^{\ast }\right) +\frac{\delta }{1-\delta }\sum_{m\in
M}w_{i}^{\sigma ^{\ast }}\left( m\right) \tilde{p}\left( m|(a^{\ast },\rho
^{\ast })\right) -\eta \geq \\
& g_{i}\left( a_{i}^{\prime },a_{-i}^{\ast }\right) +\frac{\delta }{1-\delta 
}\sum_{m\in M}w_{i}^{\sigma ^{\ast }}(m)\tilde{p}(m|(a_{i}^{\prime },\rho
_{i}^{\prime }),(a_{-i}^{\ast },\rho _{-i}^{\ast }))\ \forall (a_{i}^{\prime
},\rho _{i}^{\prime })\in \Sigma _{i}^{\sigma ^{\ast },h^{1}}.
\end{align*}

Define $x_{i}^{\ast }(m)=\frac{\delta }{1-\delta }\left( w_{i}^{\sigma
^{\ast }}\left( m\right) -v_{i}^{\ast }\right) $. Then $\sum_{m}\lambda
_{i}x_{i}^{\ast }(m)\leq 0$ because $w_{i}^{\sigma ^{\ast }}\left( m\right)
\in E^{\eta }(\delta )$. Since $x_{i}^{\ast }$ is a translation of $\frac{%
\delta }{1-\delta }w_{i}^{\sigma ^{\ast }}$ by a constant, $(v^{\ast
},(a^{\ast },\rho ^{\ast }),x^{\ast })$ satisfies all the $\eta $-strict
incentive compatibility constraints with respect to the set of nontrivial
deviations $\hat{\Sigma}_{i}^{\sigma ^{\ast }(h^{1}),x^{\ast }}$ for player $%
i$ in the programming problem $(P^{\lambda ,\eta })$. Finally, 
\begin{align*}
& g\left( a^{\ast }\right) +E[x^{\ast }(\cdot )|(a^{\ast },\rho ^{\ast })] \\
& =g\left( a^{\ast }\right) +\frac{\delta }{1-\delta }\sum_{m\in M}\left(
w^{\sigma ^{\ast }}\left( m\right) -v^{\ast }\right) \tilde{p}\left(
m|(a^{\ast },\rho ^{\ast })\right) \\
& =v^{\ast }
\end{align*}%
and it follows that $(v^{\ast },(a^{\ast },\rho ^{\ast }),x^{\ast })$ is
feasible for $(P^{\lambda ,\eta })$ and $k^{\eta}(\lambda )$ is finite. This
implies $k^{\eta }(\lambda )\geq \lambda \cdot v^{\ast }$, hence $E^{\eta
}(\delta )$ is contained in the halfspace $H^{\eta }(\lambda )$. Since this
is true for all $\lambda \in \Lambda $, we have $E^{\eta }(\delta )\subset
\bigcap_{\lambda \in \Lambda }H^{\eta }(\lambda )=Q^{\eta }$ for any $\delta
\in (0,1)$.
\end{proof}

\vspace{3mm}

The following lemma, which corresponds to Lemma 3.2. in FL, is useful to
assess the possibility of a uniformly strict folk theorem.

\begin{lemma}
\label{bound} $k^{\eta }(-e^{i})$ is bounded above by $-\underline{v}_i^\eta$%
, where 
\begin{equation*}
\underline{v}_i^\eta = \min_{a \in A} \left[\max \left\{g_i(a),
\max_{a_{i}^{\prime }\neq a_{i}}g_{i}(a)+\eta\right\}\right].
\end{equation*}
\end{lemma}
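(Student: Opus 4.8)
The plan is to prove the equivalent statement that every feasible solution $(v,a,\rho ,x)$ of $(P^{-e^{i},\eta })$ satisfies $v_{i}\geq \underline{v}_{i}^{\eta }$. Since the objective of $(P^{\lambda ,\eta })$ at $\lambda =-e^{i}$ is $\lambda \cdot v=-v_{i}$, we have $k^{\eta }(-e^{i})=\sup \{-v_{i}\}$ over feasible points (and $k^{\eta }(-e^{i})=-\infty $ if there are none, in which case the bound is trivial), so the coordinate bound $v_{i}\geq \underline{v}_{i}^{\eta }$ immediately yields $k^{\eta }(-e^{i})\leq -\underline{v}_{i}^{\eta }$.

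To get that coordinate bound I would combine two elementary observations. First, at $\lambda =-e^{i}$ the budget-balance constraints $\sum_{j}\lambda _{j}x_{j}(m)\leq 0$ collapse to $x_{i}(m)\geq 0$ for every $m\in M$; projecting the promise-keeping identity $v=g(a)+E[x(\cdot )|(a,\rho )]$ onto coordinate $i$ then gives $v_{i}=g_{i}(a)+E[x_{i}(\cdot )|(a,\rho )]\geq g_{i}(a)$. Second, I would invoke the $\eta$-strict incentive constraints for player $i$. The point is that any deviation with $a_{i}^{\prime }\neq a_{i}$ belongs to $\hat{\Sigma}_{i}^{(a,\rho ),x}$ by definition, regardless of its effect on the distribution of $x(\cdot )$; so for each $a_{i}^{\prime }\neq a_{i}$ I may use the deviation $(a_{i}^{\prime },\rho _{i})$ and read off $v_{i}-\eta =g_{i}(a)+E[x_{i}|(a,\rho )]-\eta \geq g_{i}(a_{i}^{\prime },a_{-i})+E[x_{i}(\cdot )|(a_{i}^{\prime },\rho _{i}),(a_{-i},\rho _{-i})]\geq g_{i}(a_{i}^{\prime },a_{-i})$, again using $x_{i}\geq 0$. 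Hence $v_{i}\geq g_{i}(a_{i}^{\prime },a_{-i})+\eta $ for all $a_{i}^{\prime }\neq a_{i}$. Putting the two bounds together, $v_{i}\geq \max \{g_{i}(a),\max _{a_{i}^{\prime }\neq a_{i}}g_{i}(a_{i}^{\prime },a_{-i})+\eta \}$ (with the inner maximum interpreted as $-\infty $ when $|A_{i}|=1$), and since $a$ ranges over $A$ this is at least $\min _{a\in A}\max \{g_{i}(a),\max _{a_{i}^{\prime }\neq a_{i}}g_{i}(a_{i}^{\prime },a_{-i})+\eta \}=\underline{v}_{i}^{\eta }$. Taking the supremum of $-v_{i}$ over feasible points completes the argument.

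There is no substantial obstacle here; this is the $\eta$-strict analogue of FL's Lemma 3.2, and the only point deserving care is the membership test for $\hat{\Sigma}_{i}^{(a,\rho ),x}$ — one must notice that a change of action always counts as a nontrivial deviation, so the incentive inequality is available for every $a_{i}^{\prime }\neq a_{i}$ without tracking how it moves $x(\cdot )$. This is also what makes $\eta $ enter additively, exactly as in the displayed formula for $\underline{v}_{i}^{\eta }$.
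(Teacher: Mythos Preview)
Your argument is correct and is essentially identical to the paper's own proof: both observe that at $\lambda=-e^{i}$ the budget constraint reduces to $x_{i}(m)\geq 0$, use this with promise-keeping to get $v_{i}\geq g_{i}(a)$, and use the $\eta$-strict incentive constraint against action deviations to get $v_{i}\geq \max_{a_{i}^{\prime}\neq a_{i}}g_{i}(a_{i}^{\prime},a_{-i})+\eta$. If anything, you are slightly more careful than the paper in spelling out why action deviations are automatically in $\hat{\Sigma}_{i}^{(a,\rho),x}$ and in handling the infeasible case.
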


\begin{proof}
Suppose that $(v,a,\rho ,x)$ is feasible for problem $(P^{-e^{i} ,\eta })$.
The last constraint of the problem becomes $x_{i}(m)\geq 0\ \forall m\in M$.
Then player $i$'s payoff 
\begin{equation*}
v_{i}=g_{i}(a)+\sum_{m\in M}x_{i}(m)\tilde{p}(m|(a,\rho ))
\end{equation*}
is bounded from below by $g_{i}(a)$. By the $\eta$-strict incentive
constraint, $v_{i}$ is also bounded from below by $\max_{a_{i}^{\prime }\neq
a_{i}}g_{i}(a_{i}^{\prime },a_{-i})+\eta.$ Hence $v_{i}$ is bounded below by 
\begin{equation*}
\underline{v}_i^\eta =\min_{a\in A}\left[ \max \left\{
g_{i}(a),\max_{a_{i}^{\prime }\neq a_{i}}g_{i}(a_{i}^{\prime },a_{-i})+\eta
\right\} \right]
\end{equation*}
Therefore, $k^{\eta }(-e^{i})$ is bounded from above by $-\underline{v}%
_i^\eta$.
\end{proof}

\bigskip

This $\underline{v}_i^\eta$ coincides with the minmax payoff $0$ when $\eta
=0$, but can be strictly positive when $\eta >0$. As the next lemma shows,
it coincides with the minmax payoff if and only if there exists a minmax
action profile for player $i$ where player $i$ plays an $\eta $-strictly
optimal action.

\begin{lemma}
\label{minmax} For each $i\in N$, $\underline{v}_{i}^{\eta }=0$ if there
exists $\underline{a}^{i}\in A$ such that $g_{i}(\underline{a}%
^{i})=\min_{a_{-i}^{\prime }}\max_{a_{i}^{\prime }}g_{i}(a_{i}^{\prime
},a_{-i}^{\prime })$ and $g_{i}(\underline{a}^{i})-g_{i}(a_{i}^{\prime },%
\underline{a}_{-i}^{i})\geq \eta $ for any $a_{i}^{\prime }\neq \underline{a}%
_{i}^{i}$. Furthermore, $\underline{v}_{i}^{\eta }>0$ if there is no such $%
\underline{a}^{i}\in A$.
\end{lemma}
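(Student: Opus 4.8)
The plan is to unwind the definition of $\underline v_i^\eta$ against the paper's normalization that player $i$'s pure‑strategy minmax is $0$, i.e.\ $\min_{a_{-i}'}\max_{a_i'}g_i(a_i',a_{-i}')=0$. Write $\phi_i(a):=\max\{g_i(a),\ \max_{a_i'\neq a_i}g_i(a_i',a_{-i})+\eta\}$, so that $\underline v_i^\eta=\min_{a\in A}\phi_i(a)$, a minimum attained because $A$ is finite. I would first record the uniform lower bound $\underline v_i^\eta\geq 0$: since $\eta\geq 0$ and $g_i(a)=g_i(a_i,a_{-i})$ is itself one of the terms over which the inner maximum ranges, $\phi_i(a)\geq\max_{a_i'}g_i(a_i',a_{-i})$ for every $a$, and minimizing the right‑hand side over $a$ — equivalently over $a_{-i}$ alone — gives exactly the minmax value $0$. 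With this in hand the two assertions of the lemma are precisely the two directions of the equivalence ``$\underline v_i^\eta=0\iff$ a profile $\underline a^i$ with the stated properties exists,'' so it suffices to prove each implication.

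For the ``if there exists $\underline a^i$'' direction I would simply evaluate $\phi_i$ at $\underline a^i$. Its first argument $g_i(\underline a^i)$ is the minmax, namely $0$; the hypothesis $g_i(\underline a^i)-g_i(a_i',\underline a_{-i}^i)\geq\eta$ for all $a_i'\neq\underline a_i^i$ says precisely that $g_i(a_i',\underline a_{-i}^i)+\eta\leq 0$, so the second argument is $\leq 0$. Hence $\phi_i(\underline a^i)=0$, so $\underline v_i^\eta\leq 0$, and together with the lower bound $\underline v_i^\eta=0$.

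For the ``if there is no such $\underline a^i$'' direction I would argue the contrapositive: assuming $\underline v_i^\eta=0$, I construct a profile with the required properties. Pick $a^\ast\in A$ attaining the minimum, so $\phi_i(a^\ast)=0$; then $g_i(a^\ast)\leq 0$ and $g_i(a_i',a_{-i}^\ast)\leq-\eta$ for every $a_i'\neq a_i^\ast$, whence $\max_{a_i'}g_i(a_i',a_{-i}^\ast)\leq 0$, which by the lower‑bound argument must in fact equal $0$; thus $a_{-i}^\ast$ minmaxes player $i$. A short case split closes it. If $\eta>0$, the strict bound $g_i(a_i',a_{-i}^\ast)\leq-\eta<0=\max_{a_i'}g_i(a_i',a_{-i}^\ast)$ for $a_i'\neq a_i^\ast$ forces the maximum to be attained at $a_i^\ast$, so $g_i(a^\ast)=0$ and $\underline a^i:=a^\ast$ has both properties. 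If $\eta=0$, take any best reply $\hat a_i$ to $a_{-i}^\ast$ and set $\underline a^i:=(\hat a_i,a_{-i}^\ast)$: then $g_i(\underline a^i)=\max_{a_i'}g_i(a_i',a_{-i}^\ast)=0$ equals the minmax, and the slack condition $g_i(\underline a^i)-g_i(a_i',a_{-i}^\ast)\geq 0=\eta$ holds because $\hat a_i$ is a best reply. Either way such a profile exists, contradicting the hypothesis, so $\underline v_i^\eta\neq 0$; with the lower bound, $\underline v_i^\eta>0$.

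The only mildly delicate point — and what I would flag as the main obstacle — is this last construction: from $\underline v_i^\eta=0$ one immediately gets only $g_i(a^\ast)\leq 0$, not that $a_i^\ast$ is itself a best reply to $a_{-i}^\ast$, so one must either invoke the strict‑slack inequality (which rescues the argument when $\eta>0$) or replace $a_i^\ast$ by an honest best reply to $a_{-i}^\ast$ (harmless when $\eta=0$, since the slack condition is then trivially met). Everything else is a direct reading of the $\max$‑of‑$\max$ definition of $\underline v_i^\eta$ against the minmax normalization, and should be routine.
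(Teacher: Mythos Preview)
Your proof is correct and follows essentially the same approach as the paper: both establish the nonnegativity of $\phi_i(a)$ from the minmax normalization, evaluate at $\underline{a}^i$ for the first claim, and handle the second by casework on whether $a_i$ best-responds. Your organization differs only cosmetically---you prove the lower bound in one line via $\phi_i(a)\geq\max_{a_i'}g_i(a_i',a_{-i})$ rather than by a best-reply case split, and you argue the ``furthermore'' part by contrapositive (constructing $\underline{a}^i$ from a minimizer $a^\ast$) rather than directly showing $\phi_i(a)>0$ for every $a$---but the underlying content is identical.
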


\begin{proof}
Fix $i$ and choose any $a \in A$. If $a_i$ is a best response to $a_{-i}$,
then 
\begin{align*}
g_{i}(a)=\max_{a_{i}^{\prime}}g_{i}(a_{i}^{\prime},a_{-i}) \geq \min
_{a_{-i}^\prime}\max_{a_{i}^{\prime}}g_{i}(a_{i}^{\prime},a_{-i}^\prime) = 0.
\end{align*}
If not, then 
\begin{align*}
\max_{a_{i}^{\prime } \neq a_{i}}g_{i}(a_{i}^{\prime}, a_{-i}) +\eta \geq
\min
_{a_{-i}^\prime}\max_{a_{i}^{\prime}}g_{i}(a_{i}^{\prime},a_{-i}^\prime)
+\eta \geq 0.
\end{align*}
Hence, $\max \left\{ g_{i}(a),\max_{a_{i}^{\prime }\neq
a_{i}}g_{i}(a_{i}^{\prime },a_{-i})+\eta \right\}$ is nonnegative for any $a
\in A$.

Suppose that $\underline{a}^{i}\in A$ satisfies the conditions of the Lemma.
Then $g_{i}(\underline{a}^{i}) =0$ and $\max_{a_{i}^{\prime }\neq \underline{%
a}^{i}_{i}}g_{i}(a_{i}^{\prime },\underline{a}^{i}_{-i})+\eta \leq g_{i}(%
\underline{a}^{i}) =0$. Hence $\underline{v}_i^\eta = 0$ is achieved at $a = 
\underline{a}^{i}$.

Suppose that there is no such $\underline{a}^{i}\in A$. Then $\eta$ must be
strictly positive since the minmax action profile would satisfy the
conditions when $\eta = 0$. Take any $a\in A$. If $a_{i}$ is a best response
to $a_{-i}$, then $g_{i}(a)\geq \min_{a_{-i}^{\prime }}\max_{a_{i}^{\prime
}}g_{i}(a_{i}^{\prime },a_{-i}^{\prime })=0$. Hence we have either (1) $%
g_{i}(a)>0$ or (2) $g_{i}(a)=0$ but $g_{i}(a)-\max_{a_{i}^{\prime }\neq
a_{i}}g_{i}(a_{i}^{\prime },a_{-i})<\eta $, which implies $%
\max_{a_{i}^{\prime }\neq a_{i}}g_{i}(a_{i}^{\prime },a_{-i})+\eta >0$.
Hence, $\max \left\{ g_{i}(a),\max_{a_{i}^{\prime }\neq
a_{i}}g_{i}(a_{i}^{\prime },a_{-i})+\eta \right\} $ is strictly positive in
either case.

When $a_{i}$ is not a best response to $a_{-i}$, then 
\begin{equation*}
\max_{a_{i}^{\prime }\neq a_{i}}g_{i}(a_{i}^{\prime },a_{-i})+\eta \geq
\min_{a_{-i}^{\prime }}\max_{a_{i}^{\prime }}g_{i}(a_{i}^{\prime
},a_{-i}^{\prime }) + \eta > 0 
\end{equation*}
as $\eta > 0$. Since $A$ is a finite set, $\underline{v}_{i}^{\eta }=\min_{a
\in A} \left[ \max \left\{ g_{i}(a),\max_{a_{i}^{\prime }\neq
a_{i}}g_{i}(a_{i}^{\prime },a_{-i})+\eta \right\} \right] $ must be strictly
positive.
\end{proof}

\bigskip

If this bound $\underline{v}_i^\eta$ is strictly positive, then $k(-e^i) = - 
\underline{v}_i^\eta < 0$, hence the minmax payoff can never be approximated
by any $\eta$-USPPE by Theorem \ref{bound2}. So, it is necessary for a folk
theorem that an $\eta$-strict incentive is provided by the current payoffs
at the minmax point.

Also note that this bound may be achieved by some non-minmax action profile
when it is strictly positive. If no minmax action profile for player $i$ is $%
\eta $-strictly optimal for $i$, then some non-minmax action profile $\hat{a}%
\in A$ may achieve $\underline{v}_i^\eta > 0$ if $g_{i}(\hat{a})$ is close
to $0$ and $\max_{a_{i}^{\prime }\neq \hat{a} _{i}}g_{i}(a_i^\prime, \hat{a}%
_{-i})+\eta $ is small as well (any deviation from $\hat{a}$ is very costly
for player $i$).

Similarly, we observe that $k^{\eta }(e^{i})$ may be strictly below $%
\max_{a}g_{i}(a)$ unless it is $\eta$-strictly optimal for player $i$ to
play the action that achieves this value. Otherwise, an additional incentive
needs to be provided for player $i$ through some punishment (as $\lambda
=e^{i}$). This necessarily leads to some inefficiency because punishment
occurs with positive probability (Green and Porter \cite{gp}). Thus it is
necessary for a folk theorem that $g_{i}(a)-g_{i}(a_{i}^{\prime
},a_{-i})\geq \eta $ holds for every $a_{i}^{\prime }\neq a_{i}$ for some
action profile $a$ that solves $\max_{a}g_{i}(a)$. If this is not the case,
then $k^{\eta }(e^{i})$ must be less than $\max_{a}g_{i}(a)$ and may be
achieved by some action profile that does not solve $\max_{a}g_{i}(a)$.

\subsection{Limit Result for Equilibrium Payoff Set}

\subsubsection{Decomposability and Local Decomposability}

Our main theorem claims that $Q^{\eta }$ provides the limit $\eta $-USPPE
payoff set when $Q^{\eta }$ has an interior point. We prove it by
establishing $\eta $-uniformly strict versions of many well-known results in 
\cite{fl} and in \cite{aps}.

We first observe that a set of payoffs can be supported by $\eta $-USPPE if
it is self-decomposable with respect to $\eta $-strict incentive constraints
with respect to nontrivial deviations. Given $\delta \in (0,1)$ and $%
w:M\rightarrow \mathbb{R}^{n}$, we consider the static game $\Gamma ^{\delta
}(G,p,w)$, where player $i$'s strategy set is $A_{i}\times R_{i}$ and player 
$i$'s payoff is $\left( 1-\delta \right) g_{i}\left( a\right) +\delta
E[w_{i}(\cdot )|(a,\rho)]$, where $w$ assigns payoffs for each message
profile and the expectation is computed with respect to $\tilde{p}(\cdot
|a,\rho )$.

\begin{definition}
A pair consisting of an action profile $a\in A$ and a profile of message
strategies $\rho \in R$ is $\eta $-\emph{enforceable} for $\eta >0$ with
respect to nonempty set $W\subset \mathbb{R}^{n}$ and $\delta \in \left(
0,1\right) $ if there exists a function $w:M\rightarrow W$ such that, for
all $i\in I$, 
\begin{align*}
& \left( 1-\delta \right) g_{i}\left( a\right) +\delta E[w_{i}\left( \cdot
\right) |(a,\rho )]-(1-\delta )\eta \\
& \geq \left( 1-\delta \right) g_{i}\left( a_{i}^{\prime },a_{-i}\right)
+\delta E[w_{i}\left( \cdot \right) |(a_{i}^{\prime },\rho _{i}^{\prime
}),(a_{-i},\rho _{-i})]\ \forall (a_{i}^{\prime },\rho _{i}^{\prime })\in 
\hat{\Sigma}_{i}^{(a,\rho ),w}
\end{align*}
\end{definition}
where $\hat{\Sigma}_{i}^{(a,\rho ),w}$ is the set of nontrivial deviations
from $(a,\rho )$ for player $i$ with respect to $w$. If $v=\left( 1-\delta
\right) g\left( a\right) +\delta E[w\left( \cdot \right) |(a,\rho )]$ for
some $\eta $-enforceable pair $(a,\rho )$ and $w:M\rightarrow W$, then we
say that $v$ is $\eta $-\emph{decomposable} and that $((a,\rho ),w)$ $\eta $%
-decomposes $v$ with respect to $W$ and $\delta $. Define the set of $\eta $%
-decomposable payoffs with respect to $W$ and $\delta $ as follows:%
\begin{equation*}
B\left( \delta ,W,\eta \right) :=\{v\in \mathbb{R}^{n}|v\text{ is }\eta\text{-decomposable with respect to }W\text{ and }\delta \}.
\end{equation*}%
We say that $W$ is $\eta$-\emph{self decomposable} with respect to $\delta $
if $W\subset B\left(\delta ,W,\eta \right) $.

It is easy to see that a \textquotedblleft uniformly
strict\textquotedblright\ version of Theorem 1 in Abreu, Pearce, and
Stacchetti \cite{aps} holds here: if $W$ is $\eta$-self decomposable with
respect to $\delta $, then every $v\in B\left( \delta ,W,\eta \right)$ can
be supported by some $\eta$-USPPE. Since the following lemma follows easily
from the result in Abreu, Pearce and Stacchetti, its proof is omitted.

\begin{lemma}
\label{rec}If a nonempty set $W\subset \mathbb{R}^{n}$ is bounded and $\eta$%
-self decomposable with respect to $\delta \in \left( 0,1\right) $, then $%
B\left( \delta ,W,\eta \right) \subset E^{\eta }\left( \delta \right) .$
\end{lemma}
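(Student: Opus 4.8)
The plan is to run the standard self-generation argument of Abreu, Pearce and Stacchetti, transcribed to the $\eta$-strict setting. Fix $v\in B(\delta ,W,\eta )$. Since $W$ is $\eta$-self decomposable, $W\subseteq B(\delta ,W,\eta )$, so every continuation value used to $\eta$-decompose $v$ — and, recursively, every continuation value used thereafter — again lies in $B(\delta ,W,\eta )$ and can itself be $\eta$-decomposed with continuation values in $W$. Choosing, for $v$ and for each point $v'\in W$, one pair $((a(v'),\rho (v')),w_{v'})$ with $w_{v'}:M\rightarrow W$ that $\eta$-decomposes $v'$ (the only place a selection is invoked), I would define a public strategy profile $\sigma $ by recursion on public histories: play $(a(v),\rho (v))$ at $h^{1}=\varnothing $ and attach the promised value $w_{v}(m)$ to each successor $(m)$; inductively, if the value promised at $h^{t}$ is $v'$, play $(a(v'),\rho (v'))$ at $h^{t}$ and attach $w_{v'}(m)$ to $(h^{t},m)$ for every $m\in M$. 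Off the equilibrium action at a history I set the message strategy to $\rho _{i}(v')$; this choice is immaterial, since no other player ever observes such a deviation (full support on marginals) and it enters none of the incentive constraints below.

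Next I would verify that the continuation payoff profile $w^{\sigma }(h^{t})$ of $\sigma $ at every public history equals the value $v'$ promised there by the recursion. By the decomposition identity, for every $h^{t}$ and every $T\geq 1$ the value $v'$ promised at $h^{t}$ equals the expected discounted payoff of $\sigma $ over periods $t,\dots ,t+T-1$ plus $\delta ^{T}$ times the expectation of the value promised at period $t+T$. Since $W$ is bounded (and stage payoffs are bounded as $A$ is finite), the difference between $w^{\sigma }$ and the promised value at period $t+T$ is uniformly bounded, so the remainder $\delta ^{T}(\cdots )$ vanishes as $T\rightarrow \infty $; hence $w^{\sigma }(h^{t})=v'$ for all $h^{t}$, and in particular $w^{\sigma }(\varnothing )=v$, so $\sigma $ has payoff $v$. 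This is the one step where boundedness of $W$ is used.

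It then remains to check that $\sigma $ is an $\eta$-USPPE. Fix $h^{t}$ with promised value $v'$; by construction $\sigma ^{a}(h^{t})=a(v')$, the on-path message strategy profile is $\rho (v')$, and $w^{\sigma }(h^{t},m)=w_{v'}(m)$ for every $m\in M$ by the previous paragraph. Consequently the set $\Sigma _{i}^{\sigma ,h^{t}}$ of nontrivial deviations at $h^{t}$ with respect to $w^{\sigma }$ coincides exactly with the set $\hat{\Sigma}_{i}^{(a(v'),\rho (v')),w_{v'}}$ appearing in the definition of $\eta$-enforceability, and the one-shot incentive inequality required of an $\eta$-USPPE at $h^{t}$ is, after dividing by $1-\delta $, precisely the $\eta$-enforceability inequality satisfied by $((a(v'),\rho (v')),w_{v'})$. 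Hence every one-shot deviation constraint with the $\eta$ wedge holds at every public history, so $\sigma $ is an $\eta$-USPPE and $v\in E^{\eta }(\delta )$.

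I expect no serious obstacle: this is the $\eta$-strict version of APS self-generation. The only points needing care are matching the two notions of ``nontrivial deviation'' — automatic once one observes $w^{\sigma }(h^{t},\cdot )=w_{v'}(\cdot )$ — and the passage $\delta ^{T}\rightarrow 0$ that pins down continuation payoffs, for which boundedness of $W$ is exactly what is required.
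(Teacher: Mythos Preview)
Your proposal is correct and is exactly the standard Abreu--Pearce--Stacchetti self-generation argument, adapted to carry the $\eta$-wedge and the dependence of the nontrivial-deviation set on the continuation payoff function; the paper itself omits the proof, merely noting that it ``follows easily from the result in Abreu, Pearce and Stacchetti,'' so your write-up is precisely the argument the paper has in mind. The two delicate points you flag---that $w^{\sigma}(h^{t},\cdot)=w_{v'}(\cdot)$ forces $\Sigma_{i}^{\sigma,h^{t}}=\hat{\Sigma}_{i}^{(a(v'),\rho(v')),w_{v'}}$, and that boundedness of $W$ kills the $\delta^{T}$ remainder---are the only places where the $\eta$-strict version differs nontrivially from the classical one, and you handle both correctly.
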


For the rest of this subsection, we prove that local $\eta$-self
decomposability of $W$ implies $\eta$-self decomposability of $W$. In the
framework of repeated games with imperfect public monitoring, Fudenberg,
Levine, and Maskin (\cite{flm}) introduced a notion of local self
decomposability that is sufficient for self decomposability. Here we prove
the corresponding lemma in our setting. We begin with a lemma that
establishes a certain monotonicity property of $B.$ It implies that, if $W$
is $\eta $-self decomposable with respect to $\delta \in \left( 0,1\right) ,$
then $W$ is $\eta $-self decomposable for every $\delta ^{\prime }\in \left(
\delta ,1\right) $.

\begin{lemma}
\label{monotone}If $W\subseteq \mathbb{R}^{n}$ is convex and $C\subseteq
B\left( \delta ,W,\eta \right) \cap W$, then $C\subseteq B\left( \delta
^{\prime },W,\eta \right) $ for every $\delta ^{\prime }\in \left( \delta
,1\right) .$
\end{lemma}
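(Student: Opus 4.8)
The plan is to use the classical ``raise the discount factor'' construction of Fudenberg, Levine and Maskin \cite{flm}, modified so that the $\eta$-wedge in the incentive constraints is preserved exactly. Fix an arbitrary $v \in C$. Since $v \in B(\delta, W, \eta)$, there exist an $\eta$-enforceable pair $(a,\rho) \in A \times R$ and a function $w : M \to W$ such that $((a,\rho), w)$ $\eta$-decomposes $v$ with respect to $W$ and $\delta$; and since $C \subseteq W$ we also have $v \in W$ — this is the one place the extra hypothesis ``$\cap\, W$'' is used. For $\delta' \in (\delta, 1)$, put $\alpha := \frac{\delta(1-\delta')}{\delta'(1-\delta)}$ and define $w'(m) := \alpha\, w(m) + (1-\alpha)\, v$ for each $m \in M$. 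First I would observe that $\alpha \in (0,1)$: this follows at once from $\delta < \delta'$, since then $\frac{\delta}{\delta'} < 1$ and $\frac{1-\delta'}{1-\delta} < 1$. Convexity of $W$ then gives $w'(m) \in W$ for all $m$, so $w' : M \to W$.

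Next I would check that $((a,\rho), w')$ $\eta$-decomposes $v$ with respect to $W$ and $\delta'$. For the adding-up identity, substituting $E[w(\cdot)|(a,\rho)] = \frac{1}{\delta}\bigl(v - (1-\delta)g(a)\bigr)$ into $(1-\delta')g(a) + \delta'\, E[w'(\cdot)|(a,\rho)] = (1-\delta')g(a) + \delta'\alpha\, E[w(\cdot)|(a,\rho)] + \delta'(1-\alpha)\, v$ and using the elementary identities $\delta'\alpha = \frac{\delta(1-\delta')}{1-\delta}$ and $\delta'(1-\alpha) = \frac{\delta'-\delta}{1-\delta}$, the right-hand side collapses to $v$. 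For the incentive constraints, the key preliminary point is that the set of nontrivial deviations does not change: the map $y \mapsto \alpha y + (1-\alpha)v$ is an affine bijection of $\mathbb{R}^n$ (here $\alpha \neq 0$ is essential), and $w'$ is its composition with $w$, so the distribution of $w'(\cdot)$ induced by any profile $((a_i',\rho_i'),(a_{-i},\rho_{-i}))$ is the image under that bijection of the distribution of $w(\cdot)$ induced by the same profile; hence it coincides with the distribution induced by $(a,\rho)$ precisely when the $w(\cdot)$-distribution does, which shows $\hat{\Sigma}_i^{(a,\rho),w'} = \hat{\Sigma}_i^{(a,\rho),w}$ for every $i$.

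Then, for any $i$ and any $(a_i',\rho_i') \in \hat{\Sigma}_i^{(a,\rho),w'}$, I would expand $\delta'\, E[w_i'(\cdot)|\cdot] = \delta'\alpha\, E[w_i(\cdot)|\cdot] + \delta'(1-\alpha)\, v_i$ on both sides of the $\delta'$-version of the $\eta$-incentive constraint for $((a,\rho), w')$; the constant term $\delta'(1-\alpha)v_i$ cancels, and after dividing through by $(1-\delta') > 0$ and using $\frac{\delta'\alpha}{1-\delta'} = \frac{\delta}{1-\delta}$, the inequality becomes exactly the $\delta$-version of the $\eta$-incentive constraint for $((a,\rho), w)$, which holds by hypothesis. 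Therefore $v \in B(\delta', W, \eta)$, and since $v \in C$ was arbitrary, $C \subseteq B(\delta', W, \eta)$.

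I do not expect a real obstacle: the statement is essentially an algebraic identity once the right continuation $w'$ — the convex combination of $w(m)$ and $v$ with the specific weight $\alpha$ — has been written down. The only subtlety worth flagging is the invariance of the set of nontrivial deviations under replacing $w$ by its affine image $w'$, which is exactly why the construction must put strictly positive weight on $w(m)$; and the already-noted need for $v \in W$, which forces the hypothesis to be stated as $C \subseteq B(\delta, W, \eta) \cap W$ rather than merely $C \subseteq B(\delta, W, \eta)$.
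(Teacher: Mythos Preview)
Your proof is correct and follows essentially the same approach as the paper: the same convex combination $w'(m)=\alpha\,w(m)+(1-\alpha)\,v$ with $\alpha=\frac{\delta(1-\delta')}{\delta'(1-\delta)}$ is used, and the verification of the adding-up identity and the $\eta$-incentive constraints is substantively identical. Your treatment of the invariance of the nontrivial-deviation set is in fact more careful than the paper's, which simply asserts $\hat{\Sigma}_i^{(a,\rho),w^{\delta'}}=\hat{\Sigma}_i^{(a,\rho),w^{\delta}}$ without comment.
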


\begin{proof}
Suppose that $v\in C.$ Since $v\in B\left( \delta ,W,\eta \right) ,$ $v$ is $%
\eta$-decomposable with respect to $W$ and $\delta ,$ hence there exists a
pair $(\left( a,\rho ),w^{\delta }\right) $ that $\eta$-decomposes $v.$ For
any $\delta ^{\prime }>\delta ,$ define $w^{\delta ^{\prime }}:M\rightarrow
W $ as the following convex combination of $v$ and $w^{\delta }$: 
\begin{equation*}
w^{\delta ^{\prime }}(m)=\frac{\delta ^{\prime }-\delta }{\delta ^{\prime
}\left( 1-\delta \right) }v+\frac{\delta \left( 1-\delta ^{\prime }\right) }{%
\delta ^{\prime }\left( 1-\delta \right) }w^{\delta }(m).
\end{equation*}%
Clearly, $w^{\delta ^{\prime }}(m)\in W$ for each $m\in M$ since $W$ is
convex. Furthermore, we can show that, for every $\delta ^{\prime }\in
\left( \delta ,1\right) ,$ the pair $\left( (a,\rho ),w^{\delta ^{\prime
}}\right) $ $\eta$-decomposes $v$ with respect to $W$ and $\delta ^{\prime
}. $ To see this, first note that, for all $\delta ^{\prime }\in \left(
\delta ,1\right) ,$ and $i\in N,$ 
\begin{eqnarray*}
&&\left( 1-\delta ^{\prime }\right) g_{i}\left( a\right) +\delta ^{\prime }E 
\left[ w_{i}^{\delta ^{\prime }}(\cdot )|(a,\rho )\right] \\
&=&\left( 1-\delta ^{\prime }\right) g_{i}\left( a\right) +\frac{\delta
\left( 1-\delta ^{\prime }\right) }{1-\delta }E\left[ w_{i}^{\delta }(\cdot
)|(a,\rho )\right] +\frac{\delta ^{\prime }-\delta }{1-\delta }v_{i} \\
&=&\frac{1-\delta ^{\prime }}{1-\delta }\left\{ (1-\delta )g_{i}\left(
a\right) +\delta E\left[ w_{i}^{\delta }(\cdot )|(a,\rho )\right] \right\} +%
\frac{\delta ^{\prime }-\delta }{1-\delta }v_{i} \\
&=&v_i.
\end{eqnarray*}%
Next, note that for all $(a_{i}^{\prime },\rho _{i}^{\prime }) \in
\Sigma_i^{(a, \rho), w^{\delta^\prime}} = \Sigma_i^{(a, \rho), w^{\delta}}$
and $i \in N,$%
\begin{eqnarray*}
&&\left( 1-\delta ^{\prime }\right) g_{i}\left( a\right) +\delta ^{\prime
}E[w_{i}^{\delta ^{\prime }}(\cdot )|(a,\rho )]-(1-\delta ^{\prime })\eta \\
&=&\frac{\left( 1-\delta ^{\prime }\right) }{(1-\delta )}\left[ (1-\delta
)g_{i}\left( a\right) +\delta E\left[ w_{i}^{\delta }(\cdot )|(a,\rho )%
\right] \right] +\frac{\delta ^{\prime }-\delta }{\left( 1-\delta \right) }%
v_{i}-(1-\delta ^{\prime })\eta \\
&\geq &\frac{\left( 1-\delta ^{\prime }\right) }{(1-\delta )}\left[ \left(
1-\delta \right) g_{i}\left( a_{i}^{\prime },a_{-i}\right) +\delta
E[w_{i}^{\delta }\left( \cdot \right) |(a_{i}^{\prime },\rho _{i}^{\prime
}),(a_{-i},\rho _{-i})]+(1-\delta )\eta \right] +\frac{\delta ^{\prime
}-\delta }{\left( 1-\delta \right) }v_{i}-(1-\delta ^{\prime })\eta \\
&=&\left( 1-\delta ^{\prime }\right) g_{i}\left( a_{i}^{\prime
},a_{-i}\right) +\delta ^{\prime }\left[ \frac{\left( 1-\delta ^{\prime
}\right) \delta }{\delta ^{\prime }(1-\delta )}E[w_{i}^{\delta }\left( \cdot
\right) |(a_{i}^{\prime },\rho _{i}^{\prime }),(a_{-i},\rho _{-i})]+\frac{%
\delta ^{\prime }-\delta }{\delta ^{\prime }\left( 1-\delta \right) }v_{i}%
\right] \\
&=&\left( 1-\delta ^{\prime }\right) g_{i}\left( a_{i}^{\prime
},a_{-i}\right) +\delta ^{\prime }E\left[ w_{i}^{\delta ^{\prime }}(\cdot
)|(a_{i}^{\prime },\rho _{i}^{\prime }),(a_{-i},\rho _{-i})\right] .
\end{eqnarray*}

Consequently, $\left( (a,\rho ),w^{\delta ^{\prime }}\right) $ $\eta$%
-decomposes $v$ with respect to $W$ and $\delta ^{\prime }.$ Hence $%
C\subseteq B\left( \delta ^{\prime },W,\eta \right)$ for every $\delta
^{\prime }\in \left( \delta ,1\right) $ and this completes the proof.
\end{proof}

\bigskip

Next we introduce local $\eta$-self decomposability and show that local $%
\eta$-self decomposability implies $\eta$-self decomposability for
sufficiently large discount factors.

\begin{definition}
A nonempty set $W\subseteq \mathbb{R}^{n}$ is \emph{locally} $\eta$-self 
decomposable if, for any $v \in W,$ there exists $\delta \in \left(
0,1\right) $ and an open set $U$ containing $v$ such that $U\cap W\subset
B\left( \delta ,W,\eta \right) .$
\end{definition}

\begin{lemma}
\label{local} If $W\subset \mathbb{R}^{n}$ is compact, convex, and locally $\eta$-self decomposable, 
then there exists a $\underline{\delta }\in
\left( 0,1\right) $ such that $W$ is $\eta$-self decomposable with respect
to $\delta $ for any $\delta \in \left( \underline{\delta },1\right). $
\end{lemma}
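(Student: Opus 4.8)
The plan is to run the standard compactness‑plus‑local‑to‑global argument from Fudenberg–Levine–Maskin, adapted to our $\eta$‑strict setting. Since $W$ is compact and locally $\eta$‑self decomposable, for each $v\in W$ we get an open neighborhood $U_v$ and a discount factor $\delta_v\in(0,1)$ with $U_v\cap W\subseteq B(\delta_v,W,\eta)$. The family $\{U_v\}_{v\in W}$ is an open cover of the compact set $W$, so finitely many of them, say $U_{v_1},\dots,U_{v_K}$, already cover $W$. Set $\delta_0=\max\{\delta_{v_1},\dots,\delta_{v_K}\}<1$. By Lemma~\ref{monotone} (with the convex set $W$ and $C=U_{v_j}\cap W\subseteq B(\delta_{v_j},W,\eta)\cap W$), monotonicity in the discount factor gives $U_{v_j}\cap W\subseteq B(\delta',W,\eta)$ for every $\delta'\in(\delta_{v_j},1)$, hence for every $\delta'\in(\delta_0,1)$ and every $j$. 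Taking the union over $j$ and using that the $U_{v_j}$ cover $W$ yields $W\subseteq B(\delta',W,\eta)$ for all $\delta'\in(\delta_0,1)$, which is exactly $\eta$‑self decomposability at every such $\delta'$. So $\underline{\delta}=\delta_0$ works.

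The one subtlety to be careful about — and what I expect is the main obstacle — is the application of Lemma~\ref{monotone}. That lemma requires $C\subseteq B(\delta,W,\eta)\cap W$, i.e. the points to be re‑decomposed must themselves lie in $W$; that is why I intersect each $U_{v_j}$ with $W$ before applying it, and why local $\eta$‑self decomposability was phrased with $U\cap W$ on the left. One should also check that the nontrivial‑deviation set used in the $\eta$‑incentive constraints is not disturbed by the rescaling in Lemma~\ref{monotone}: but that lemma already records $\hat\Sigma_i^{(a,\rho),w^{\delta'}}=\hat\Sigma_i^{(a,\rho),w^{\delta}}$, since $w^{\delta'}$ is an affine, message‑independent reparametrization of $w^{\delta}$ (a common constant shift plus a common positive scaling), so the set of message strategies that change the distribution of the continuation payoff is unchanged. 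Thus each decomposition produced for $\delta_{v_j}$ transfers verbatim (after rescaling) to any larger $\delta'$, with the same $\eta$‑wedge, and there is nothing further to verify.

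It is worth noting where compactness and convexity each enter: convexity of $W$ is needed for Lemma~\ref{monotone} (the new continuation values $w^{\delta'}(m)$ are convex combinations of $v\in W$ and $w^{\delta}(m)\in W$, so they must stay in $W$), and compactness of $W$ is needed only to extract a finite subcover so that the supremum of the finitely many $\delta_{v_j}$ is strictly below $1$. Without compactness the pointwise discount factors could approach $1$ with no uniform bound. No additional hypotheses on $G$, $p$, or $\eta$ are used beyond those already in force, and the argument is entirely parallel to the corresponding step in \cite{flm} and \cite{fl}; the only new ingredient relative to the classical case is that the $\eta$‑strict incentive inequalities, rather than weak ones, are preserved under the affine reparametrization, which was verified in the proof of Lemma~\ref{monotone}.
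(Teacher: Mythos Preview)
Your proof is correct and follows essentially the same approach as the paper: extract a finite subcover $\{U_{v_k}\}$ of the compact set $W$, set $\underline{\delta}=\max_k \delta_{v_k}$, and apply Lemma~\ref{monotone} (using convexity of $W$) to conclude $U_{v_k}\cap W\subseteq B(\delta,W,\eta)$ for all $\delta\in(\underline{\delta},1)$, whence $W=\bigcup_k(U_{v_k}\cap W)\subseteq B(\delta,W,\eta)$. Your additional remarks on why the hypothesis $C\subseteq B(\delta,W,\eta)\cap W$ of Lemma~\ref{monotone} is met and why the nontrivial-deviation sets are unchanged under the affine reparametrization are accurate and make the argument more self-contained than the paper's version.
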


\begin{proof}
Choose $v\in W.$ Since $W$ is $\eta $-locally self decomposable, there
exists $\delta _{v}\in \left( 0,1\right) $ and an open ball $U_{v}$ around $%
v $ such that 
\begin{equation*}
U_{v}\cap W\subseteq B\left( \delta _{v},W,\eta \right) .
\end{equation*}%
Since $W$ is compact, there exists a finite sub-collection $\left\{
U_{v_{k}}\right\} _{k=1}^{K}$ that covers $W.$ Define $\underline{\delta }%
=\max_{k=1,...,K}\left\{ \delta _{v_{k}}\right\} $. Then 
\begin{equation*}
U_{v_{k}}\cap W\subseteq B\left( \delta _{v_{k}},W,\eta \right) \subseteq
B\left( \delta ,W,\eta \right)
\end{equation*}%
for any $\delta \in (\underline{\delta },1)$ by Lemma \ref{monotone} and the
convexity of $W$. Consequently, 
\begin{equation*}
W\mathbf{=\cup }_{k=1}^{K}\left( U_{v_{k}}\cap W\right) \subseteq B\left(
\delta ,W,\eta \right) .
\end{equation*}%
for every $\delta \in \left( \underline{\delta },1\right) .$
\end{proof}

\subsubsection{Local Decomposability of a Smooth Set in The Bounding Set}

We call a nonempty compact and convex set in $\mathbb{R}^{n}$ \textit{smooth}
if there exists a unique supporting hyperplane at every boundary point of
the set.

\bigskip

The following lemma shows that, if $Q^{\eta }$ has an interior point in $%
\mathbb{R}^{n}$, then there exists a smooth, compact and convex set in $%
intQ^{\eta }$ that is arbitrarily close to $Q^{\eta }$.

\begin{lemma}
\label{HD} Suppose that $Q^{\eta }\subseteq \mathbb{R}^{n}$ has an interior
point. Then, for every $\varepsilon >0,$ there exists a smooth compact and
convex set $W^{\prime }\subset intQ^{\eta }$ such that the Hausdorff
distance between $W^{\prime }$ and $intQ^{\eta }$ is at most $\varepsilon $.
\end{lemma}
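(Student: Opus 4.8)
The plan is to realize $Q^{\eta }$ as a convex body, shrink it slightly toward an interior point so that it lands strictly inside $\operatorname{int}Q^{\eta }$ while staying Hausdorff‑close to $Q^{\eta }$, and then ``round the corners'' of the shrunken set by taking its Minkowski sum with a tiny closed ball; the rounding makes the set smooth and displaces it only by the radius of the ball. First note that $Q^{\eta }$ is a convex body: it is closed, being an intersection of closed half‑spaces (none empty, since $Q^{\eta }$ is assumed nonempty); it is bounded, since for each $i$ and every $x\in Q^{\eta }$ one has $x_i\le k^{\eta }(e^i)\le \max_{a}g_i(a)$ and $-x_i\le k^{\eta }(-e^i)\le -\underline v_i^{\eta }$ by Lemma~\ref{bound}; and it is convex with nonempty interior by hypothesis. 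Moreover, for a convex body every boundary point is a limit of interior points, so the Hausdorff distance between $Q^{\eta }$ and $\operatorname{int}Q^{\eta }$ is $0$; hence it suffices to produce a smooth compact convex $W'\subset \operatorname{int}Q^{\eta }$ with $d_H(W',Q^{\eta })\le \varepsilon$.

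Fix $z\in \operatorname{int}Q^{\eta }$ and $\rho >0$ with $\{y:\lVert y-z\rVert \le \rho \}\subseteq Q^{\eta }$, and let $D:=\operatorname{diam}Q^{\eta }$. For $t\in (0,1)$ set $Q_t:=z+t(Q^{\eta }-z)$. Then $Q_t$ is compact and convex; every point of $Q_t$ is a convex combination $tq+(1-t)z$ of some $q\in Q^{\eta }$ with the interior point $z$ carrying strictly positive weight, so $Q_t\subseteq \operatorname{int}Q^{\eta }$; and since $tq+(1-t)z\in Q_t$ for every $q\in Q^{\eta }$, we get $d_H(Q_t,Q^{\eta })\le (1-t)D$. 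Choose $t$ so close to $1$ that $(1-t)D<\varepsilon /2$.

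Since $Q_t$ is compact and disjoint from the closed set $\mathbb R^n\setminus \operatorname{int}Q^{\eta }$ (nonempty because $Q^{\eta }$ is bounded), their distance $\theta $ is strictly positive. Pick $s$ with $0<s<\min \{\theta ,\varepsilon /2\}$, write $\overline B_s:=\{y:\lVert y\rVert \le s\}$, and set
\[
W':=Q_t\oplus \overline B_s=\{x:\operatorname{dist}(x,Q_t)\le s\}.
\]
This $W'$ is compact and convex; it is contained in $\operatorname{int}Q^{\eta }$ because $x\in Q_t$ and $\lVert b\rVert \le s<\theta $ force $x+b\in \operatorname{int}Q^{\eta }$; and $d_H(W',Q_t)\le s$, so $d_H(W',Q^{\eta })\le s+(1-t)D<\varepsilon $ and hence $d_H(W',\operatorname{int}Q^{\eta })\le \varepsilon $. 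It remains to check that $W'$ is smooth. Take $x\in \partial W'$; then $\operatorname{dist}(x,Q_t)=s$, its nearest point $k$ in $Q_t$ is unique, and $u:=(x-k)/s$ is a unit vector. Using the variational inequality $\langle x-k,k'-k\rangle \le 0$ for all $k'\in Q_t$ together with $\langle u,b\rangle \le \lVert b\rVert \le s$, one checks that $\langle u,y-x\rangle \le 0$ for every $y=k'+b\in W'$, so $u$ is an outer unit normal to $W'$ at $x$; conversely, if $v$ is any outer unit normal at $x$, then testing the supporting inequality at the point $k+sv\in W'$ gives $\langle v,u\rangle \ge 1$, whence $v=u$ by Cauchy--Schwarz. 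So $W'$ has a unique supporting hyperplane at each boundary point, i.e. $W'$ is smooth, and it is the desired set.

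I expect the one genuinely nonroutine point to be this last smoothness verification — the standard fact that a convex body rounded by a ball has a unique outer normal at every boundary point. The shrinking step, the compactness argument furnishing the safety margin $\theta $, and the Hausdorff bookkeeping via the triangle inequality are all routine convex geometry.
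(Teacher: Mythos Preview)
Your proof is correct and follows essentially the same blueprint as the paper's: obtain a compact convex set strictly inside $\operatorname{int}Q^{\eta}$ that is Hausdorff-close to $Q^{\eta}$, then take its Minkowski sum with a small closed ball to smooth it. The one difference is in how the inner approximation is produced: the paper chooses a finite $\varepsilon$-net $Z\subset \operatorname{int}Q^{\eta}$ and sets $W=\operatorname{co}Z$ (a polytope), whereas you use the homothety $Q_t=z+t(Q^{\eta}-z)$ toward an interior point. Your route is a touch cleaner, since it avoids the total-boundedness step and gives an explicit Hausdorff bound $(1-t)D$; the paper's polytope route, on the other hand, makes the ``finite number of vertices'' observation immediate when arguing that the $\epsilon'$-thickening stays in $\operatorname{int}Q^{\eta}$ (you replace this by the compactness argument yielding the margin $\theta$). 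Your smoothness verification via the projection variational inequality and Cauchy--Schwarz is a nice explicit version of the paper's more geometric observation that every boundary point of $W'$ lies on the boundary of some ball $B_v^{\epsilon'}$ and therefore inherits its unique supporting hyperplane.
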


\begin{proof}
Choose any $\varepsilon >0.$ Since bounded sets in Euclidean space are
totally bounded, there exists a finite set $Z\subseteq intQ^{\eta }$ such
that, for each $v \in intQ^{\eta },$ there exists $z \in Z$ such that $%
\left\Vert z-x \right\Vert<\varepsilon .$ Let $W=coZ$. Then $W$ is nonempty,
compact and convex. Since $Q^{\eta }$ is convex, it follows that $intQ^{\eta
}$ is convex, hence $W \subseteq intQ^{\eta }.$ For each $v\in intQ^{\eta }$%
, there exists $z\in W$ such that $\left\Vert z-x \right\Vert<\varepsilon $,
which implies $\sup_{v\in intQ^{\eta }}[\min_{z\in W} \left\Vert z-v
\right\Vert]\leq \varepsilon$. On the other hand, $\max_{z \in W}[\min_{v
\in int Q^\eta}\left\Vert z-v \right\Vert]$ is clearly $0$ since $W
\subseteq intQ^{\eta }$. Hence the Hausdorff distance between $W$ and $int
Q^\eta$ is at most $\varepsilon$.

Next we construct $W^\prime$ from $W$. Since $W \subseteq intQ^{\eta }$ is a
polyhedron and has only a finite number of vertices, we can find a small
enough $\epsilon^{\prime }>0$ such that, at every $v\in W$, the closed ball $%
B_{v}^{\epsilon ^{\prime }}$ of radius $\epsilon^{\prime }$ around $v$ is in 
$intQ^{\eta }$. Define $W^{\prime }=\bigcup_{v\in W}B_{v}^{\epsilon ^{\prime
}}$. Since $W \subseteq W^{\prime } \subseteq intQ^{\eta }$, the Hausdorff
distance between $W^\prime$ and $int Q^\eta$ is at most $\varepsilon$.

Next we show that $W^{\prime }$ is a smooth compact convex set. To show that 
$W^{\prime }$ is compact, it suffices to show that $W^{\prime }$ is closed.
Suppose that $w_{k} \in W^{\prime }$ for each $k$ and $\{w_{k}\}$ is
convergent with limit $w^{\ast }.$ For each $k$, there exists $v_{k}\in W$
such that $\left\Vert v_{k}-w_{k} \right \Vert\leq \epsilon ^{\prime }.$
Since $W$ is compact, we may assume wlog that $\{v_{k}\}$ is convergent with
limit $v^{\ast }\in W. $ Consequently, $\left\Vert v^{\ast }-w^{\ast }
\right\Vert \leq \epsilon ^{\prime }$ implying that $w^{\ast }\in W^{\prime
}.$

To show that $W^{\prime }$ is convex, choose any $x,y\in W^{\prime }$. Then
there exists $x^{\prime },y^{\prime }\in W$ such that $\left\Vert x^{\prime
}-x\right\Vert \leq \epsilon ^{\prime }$ and $\left\Vert y^{\prime
}-y\right\Vert \leq \epsilon ^{\prime }$ by definition of $W^{\prime }$. For
any $\alpha \in \lbrack 0,1]$, $\alpha x^{\prime }+(1-\alpha )y^{\prime }$
is in $W$ and the distance between $\alpha x^{\prime }+(1-\alpha )y^{\prime
} $ and $\alpha x+(1-\alpha )y$ is less than $\alpha \left\Vert x^{\prime
}-x\right\Vert +(1-\alpha )\left\Vert y^{\prime }-y\right\Vert \leq \epsilon
^{\prime }$. So, $\alpha x+(1-\alpha )y\in B_{\alpha x^{\prime }+(1-\alpha
)y^{\prime }}^{\epsilon ^{\prime }}\subseteq W^{\prime }$. Therefore $%
W^{\prime }$ is convex.

Finally, to see that $W^{\prime }$ has a unique supporting hyperplane at
every boundary point, first note that every boundary point of $W^{\prime }$
must be a boundary point of $B_{v}^{\epsilon ^{\prime }}$ for some $v\in W$.
Since a supporting hyperplane of a boundary point of $W^{\prime }$ must be a
supporting hyperplane of $B_{v}^{\epsilon ^{\prime }}$ at the same point and 
$B_{v}^{\epsilon ^{\prime }}$ cannot have multiple supporting hyperplanes at
any boundary point, the supporting hyperplane must be unique at every
boundary point of $W^{\prime }$.
\end{proof}

\bigskip

We now show that any such set $W^\prime$ that approximates $Q^{\eta }$ from
the inside is $\eta$-locally decomposable, which leads to our main result by
Lemma \ref{local}.

We need two technical lemmas for local decomposability.

\begin{lemma}
A smooth, compact and convex set $C\subseteq \mathbb{R}^{n}$ has non-empty
interior in $\mathbb{R}^{n}$.
\end{lemma}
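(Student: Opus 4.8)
The plan is to prove the contrapositive-style statement directly: a smooth, compact, convex set $C\subseteq\mathbb{R}^n$ cannot be contained in a proper affine subspace, and therefore has nonempty interior. First I would recall the elementary fact that a convex set in $\mathbb{R}^n$ has empty interior if and only if it is contained in some hyperplane (equivalently, its affine hull has dimension strictly less than $n$). So it suffices to rule out the possibility that $C$ lies in a hyperplane $\{x : \lambda\cdot x = c\}$ for some unit vector $\lambda$ and scalar $c$.

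Suppose, for contradiction, that $C\subseteq\{x:\lambda\cdot x=c\}$. Then at every point $v\in C$, both the hyperplanes $\{x:\lambda\cdot x=c\}$ and $\{x:(-\lambda)\cdot x=-c\}$ support $C$ (indeed $C$ lies entirely within each of them, so trivially on one side), and these are two distinct supporting hyperplanes. Since $C$ is nonempty I can pick a boundary point — and here I should be slightly careful: I want a point of $C$ that qualifies as a boundary point in the sense of the smoothness definition. Because $C$ is nonempty and compact, it has at least one point; and since $C$ lies in a hyperplane, every point of $C$ is a boundary point of $C$ (the interior of $C$ relative to $\mathbb{R}^n$ is empty under our contradiction hypothesis, so $C$ equals its own topological boundary). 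At any such point the two distinct hyperplanes $\lambda\cdot x=c$ and $-\lambda\cdot x=-c$ are both supporting hyperplanes, contradicting the uniqueness clause in the definition of smoothness. Hence $C$ is not contained in any hyperplane, its affine hull is all of $\mathbb{R}^n$, and so $C$ has nonempty interior.

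I expect the only subtlety to be making the phrase ``supporting hyperplane at a boundary point'' precise and confirming that a hyperplane containing $C$ counts as a supporting hyperplane at each of its points; this is immediate from the standard definition (a hyperplane $H$ supports $C$ at $v\in C$ if $v\in H$ and $C$ lies in one of the two closed halfspaces bounded by $H$), and a hyperplane containing $C$ satisfies this with either choice of halfspace. So the proof is short: reduce ``nonempty interior'' to ``not contained in a hyperplane,'' then observe that a set contained in a hyperplane $H$ has the two distinct supporting hyperplanes $H$ and its reflection at every point, violating smoothness.
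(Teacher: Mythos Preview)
Your argument has a genuine gap: the two ``distinct'' hyperplanes you exhibit, $\{x:\lambda\cdot x=c\}$ and $\{x:(-\lambda)\cdot x=-c\}$, are the \emph{same} set, since the equation $\lambda\cdot x=c$ is algebraically equivalent to $-\lambda\cdot x=-c$. Reversing the normal of a hyperplane does not produce a new hyperplane; it produces the same affine subspace with the opposite orientation. In the paper (and in standard usage) ``unique supporting hyperplane'' means unique as a set, so you have not produced two distinct supporting hyperplanes and the contradiction does not go through.

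Your overall strategy is fine; what is missing is a second hyperplane that is genuinely different from the containing one. The paper supplies one as follows. Translate so that $0\in C$ and the affine hull $S$ of $C$ is a proper linear subspace. Smoothness forces $C$ to contain more than one point (a singleton has infinitely many supporting hyperplanes at its unique boundary point), so pick some nonzero $p\in C$ and let $x^{\ast}\in\arg\max_{x\in C}p\cdot x$. The hyperplane $\{x:p\cdot x=p\cdot x^{\ast}\}$ supports $C$ at $x^{\ast}$ and does \emph{not} pass through $0$, since $p\cdot x^{\ast}\geq p\cdot p>0$. On the other hand, any nonzero $q\in S^{\perp}$ gives the hyperplane $\{x:q\cdot x=0\}$, which contains all of $C$ (hence supports it at $x^{\ast}$) and \emph{does} pass through $0$. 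These two hyperplanes are therefore distinct as sets, contradicting smoothness. This is exactly the step your sign-flip shortcut was trying to replace.
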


\begin{proof}
Suppose otherwise. Then the affine hull of $C$ has dimension less than $n$.
Let $S$ denote the affine hull and, translating if necessary, we may assume
that $0\in C$ and $S$ is a vector subspace of $\mathbb{R}^{n}.$ Since $C$ is
smooth, $C$ is not a singleton set. So we can find $p\neq 0\in C$. Let $%
x^{\ast }\in \arg \max_{x\in C}p\cdot x.$ Then $p\cdot x\leq p\cdot x^{\ast
} $ for all $x\in C$. Hence the set $\{x\in \mathbb{R}^{n}|p\cdot x=p\cdot
x^{\ast }\}$ is a supporting hyperplane for $C$ at $x^{\ast }$. Now choose $%
q\in S^{\bot }.$ Then $q\cdot x=0$ for all $x\in C$, so the set $\{x\in 
\mathbb{R}^{n}|q\cdot x=0\}$ is a supporting hyperplane for $C$ at $x^{\ast
} $. Clearly $\{x\in \mathbb{R}^{n}|p\cdot x=p\cdot x^{\ast }\}$ and $\{x\in 
\mathbb{R}^{n}|q\cdot x=0\}$ are distinct hyperplanes because the former
does not include $0$ (since $p\cdot x^{\ast }\geq p\cdot p>0)$, while the
latter does.
\end{proof}

\bigskip

\begin{lemma}
\label{smooth} Let $W\subset \mathbb{R}^{n}$ be a smooth, compact and convex
set and let $v$ be a boundary point of $W$. Let $\lambda ^{v}\neq 0\in 
\mathbb{R}^{n}$ be a normal to the unique supporting hyperplane of $W$ at $v,
$ i.e., $\lambda ^{v}\cdot v\geq \lambda ^{v}\cdot x$\ for all $x\in W$.
Then, for any point $y\in \mathbb{R}^{n}$\ such that $\lambda ^{v}\cdot
v>\lambda ^{v}\cdot y,$ there exists $\alpha ^{\ast }\in \left( 0,1\right) $
such that $\left( 1-\alpha \right) v+\alpha y \in int W$ for any $\alpha \in
\left( 0,\alpha ^{\ast }\right) .$

Conversely, if $\lambda^{v}\in \mathbb{R}^{n}$ satisfies $\lambda^{v}\cdot
v\geq \lambda ^{v}\cdot x$ for all $x\in W $ and, for any $y\in \mathbb{R}%
^{n}$ such that $\lambda ^{v}\cdot v>\lambda ^{v}\cdot y$, there exists $%
\alpha ^{\ast }\in \left( 0,1\right) $ such that $\left( 1-\alpha \right)
v+\alpha y \in int W$ for any $\alpha \in \left( 0,\alpha ^{\ast}\right)$,
then there is the unique supporting hyperplane of $W$ at $v$ and $\lambda^v$
is its normal vector.
\end{lemma}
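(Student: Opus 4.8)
The plan is to prove the two directions separately, using elementary convexity geometry. For the forward direction, fix a boundary point $v$ of the smooth set $W$, let $\lambda^v$ be a normal to the unique supporting hyperplane at $v$ (normalized so that $\lambda^v\cdot v \geq \lambda^v \cdot x$ for all $x\in W$), and take any $y$ with $\lambda^v\cdot v > \lambda^v\cdot y$. First I would invoke the preceding lemma: since $W$ is smooth, compact and convex, it has nonempty interior, so fix some $z_0\in \operatorname{int}W$. The key observation is that $\lambda^v \cdot z_0 < \lambda^v\cdot v$ strictly, because a point on the supporting hyperplane through $v$ that also lies in $W$ would have to be a boundary point, not an interior point. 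Now I would argue that the segment from $z_0$ to $y$ stays strictly below the hyperplane, i.e. $\lambda^v\cdot((1-t)z_0 + t y) < \lambda^v\cdot v$ for all $t\in[0,1]$, since both endpoints satisfy this strict inequality. Then, because $z_0\in\operatorname{int}W$ and $v\in W$, for small $\alpha$ the point $(1-\alpha)v + \alpha y$ can be written as a point in a small neighborhood around a convex combination of $z_0$ and $v$ — more carefully, I would show $(1-\alpha)v+\alpha y$ lies in the convex hull of a small ball around $z_0$ (contained in $W$) together with $v$, for $\alpha$ small enough, hence in $\operatorname{int}W$. Concretely: write $(1-\alpha)v+\alpha y = (1-\beta)v + \beta w_\alpha$ where $w_\alpha$ is a point that, for suitable $\beta$ bounded away from $0$, is forced into $\operatorname{int}W$ because it is close to $z_0$; the standard fact that $(1-\beta)(\text{boundary pt}) + \beta(\text{interior pt}) \in \operatorname{int}W$ finishes it.

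For the converse, suppose $\lambda^v$ satisfies $\lambda^v\cdot v \geq \lambda^v\cdot x$ for all $x\in W$ (so $\{x:\lambda^v\cdot x = \lambda^v\cdot v\}$ is \emph{a} supporting hyperplane at $v$) and also the stated "interior-reaching" property. I want to show this supporting hyperplane is unique and $\lambda^v$ is its normal. Suppose for contradiction there is another supporting hyperplane at $v$, with normal $\mu$ not proportional to $\lambda^v$, so $\mu\cdot v \geq \mu\cdot x$ for all $x\in W$. The idea is to produce a point $y$ with $\lambda^v\cdot v > \lambda^v\cdot y$ but $\mu\cdot y \geq \mu\cdot v$; then $(1-\alpha)v+\alpha y$ would have $\mu\cdot((1-\alpha)v+\alpha y) = (1-\alpha)\mu\cdot v + \alpha \mu\cdot y \geq \mu\cdot v$, which forces that point onto the supporting hyperplane $\{\mu\cdot x = \mu\cdot v\}$ and hence out of $\operatorname{int}W$ — contradicting the hypothesis that $(1-\alpha)v+\alpha y\in\operatorname{int}W$ for small $\alpha$. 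To build such a $y$: since $\mu$ and $\lambda^v$ are linearly independent, choose a direction $d$ with $\lambda^v\cdot d < 0$ and $\mu\cdot d = 0$ (possible because $\{\mu\cdot d = 0\}$ is a hyperplane not contained in $\{\lambda^v\cdot d=0\}$), and set $y = v + d$. Then $\lambda^v\cdot y < \lambda^v\cdot v$ and $\mu\cdot y = \mu\cdot v$, as needed.

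The main obstacle is the forward direction's "for small $\alpha$, $(1-\alpha)v+\alpha y\in\operatorname{int}W$" step: one must be careful that pushing from $v$ toward $y$ actually enters the interior rather than grazing the boundary somewhere other than near $v$. The clean way is to reduce to the classical lemma that for any $b$ on the boundary of a convex set and any $c$ in its interior, the half-open segment $(b,c]$ lies in the interior; I would apply this not directly to $y$ (which need not be in $W$ at all) but to an auxiliary interior point. Specifically, pick $z_0\in\operatorname{int}W$ with a ball $B(z_0,r)\subseteq W$; for $\alpha$ small, $(1-\alpha)v+\alpha y$ is within distance $\alpha\|y - z_0\|\cdot(\text{const})$ of the segment point $(1-\alpha)v + \alpha z_0$, and by choosing $\alpha$ small enough this perturbation is absorbed into the interior ball structure, placing the point in $\operatorname{int}W$. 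I expect this to be the one place where a genuine (if short) estimate is required; everything else is linear algebra and the separating/supporting hyperplane dichotomy. I would also remark that since $\lambda^v\ne 0$ and (after the forward direction) the supporting hyperplane is unique, the converse's conclusion "$\lambda^v$ is its normal vector" is immediate once uniqueness is established.
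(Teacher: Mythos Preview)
Your converse argument is essentially the paper's: produce a direction $d$ with $\lambda^v\cdot d<0$ but $\mu\cdot d\geq 0$, set $y=v+d$, and observe that $(1-\alpha)v+\alpha y$ cannot lie in $\operatorname{int}W$ because it violates (weakly) the second supporting inequality. That part is fine.

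The forward direction, however, has a genuine gap. Your perturbation estimate does not close: the distance from $(1-\alpha)v+\alpha y$ to $(1-\alpha)v+\alpha z_0$ is exactly $\alpha\|y-z_0\|$, while the ball about $(1-\alpha)v+\alpha z_0$ that convexity guarantees inside $W$ has radius only $\alpha r$ (from $(1-\alpha)\{v\}+\alpha B(z_0,r)\subseteq W$). Both scale linearly in $\alpha$, so ``choosing $\alpha$ small enough'' buys you nothing; you would need $\|y-z_0\|<r$, which is false in general. The alternate parametrization $(1-\alpha)v+\alpha y=(1-\beta)v+\beta w_\alpha$ has the same defect: $w_\alpha=v+\tfrac{\alpha}{\beta}(y-v)$ moves along the ray through $y$, not toward $z_0$, so there is no reason for $w_\alpha$ to land in $\operatorname{int}W$. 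More tellingly, your argument never uses the \emph{uniqueness} of the supporting hyperplane beyond invoking nonempty interior; if it worked, it would prove the claim at a corner of a square, which is false.

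The paper's route is exactly where smoothness enters: argue by contradiction. If for some $y$ with $\lambda^v\cdot y<\lambda^v\cdot v$ there are $\alpha_k\downarrow 0$ with $(1-\alpha_k)v+\alpha_k y\notin\operatorname{int}W$, separate each such point from $\operatorname{int}W$ by a hyperplane with unit normal $q_k$, pass to a subsequential limit $q$, and check that $q$ supports $W$ at $v$ but satisfies $q\cdot (y-v)\geq 0$, hence is not a positive multiple of $\lambda^v$. That contradicts smoothness. This separation-plus-limit step is the missing idea in your sketch.
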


\begin{proof}
Translating if necessary, we may assume that $v=0$. We argue by
contradiction. Suppose that there exists $y\in \mathbb{R}^{n}$ such that $%
\lambda ^{v}\cdot y<0$ but for each $\alpha ^{\ast }\in \left( 0,1\right) $
there exists $\alpha \in \left( 0,\alpha ^{\ast }\right) $ such that $\alpha
y$\ $\notin intW.$ Then there exists a sequence $\{\alpha _{k}\}$ such that $%
0<\alpha _{k}<1,$ $\alpha _{k}\rightarrow 0$ and $\alpha _{k}y\notin intW$
for each $k$. Since $intW$ is non-empty by the previous lemma and convex,
there exists for each $k$ a $q_{k}\neq 0$ such that 
\begin{equation*}
\frac{q_{k}}{||q_{k}||}\cdot x\leq \frac{q_{k}}{||q_{k}||}\cdot \left(
\alpha _{k}y\right)
\end{equation*}%
for all $x\in intW$ by the separating hyperplane theorem. Let $k\rightarrow
\infty $ and $\frac{q_{k}}{||q_{k}||}\rightarrow q$ for some $q\neq 0$,
extracting a subsequence if necessary. Then $q\cdot x\leq 0$ for all $x\in
intW.$ Then $q\cdot x\leq 0$ for all $x\in W$, hence $q$ is a normal vector
for a supporting hyperplane of $W$ at $v=0$. To derive a contradiction, we
show that $q\neq \beta \lambda ^{v}$ for all $\beta >0.$ To see this, take
any $z\in intW$ and note that $\alpha _{k}^{2}z\in intW$ (since $0\in W).$
Therefore 
\begin{equation*}
\frac{q_{k}}{||q_{k}||}\cdot (\alpha _{k}z)\leq \frac{q_{k}}{||q_{k}||}\cdot
y
\end{equation*}%
implying that $0\leq q\cdot y.$ If $\beta >0,$ then $(\beta \lambda
^{v})\cdot y<0.$ Consequently, $q\neq \beta \lambda ^{v}$, which contradicts
the smoothness of $W$.

For the converse, $\left\{x \in \mathbb{R}^n| \lambda^v \cdot x = 0\right\}$
is clearly supporting hyperplane of $W$ at $v=0$. Suppose that there is a
different supporting hyperplane $\left\{x \in \mathbb{R}^n| \lambda^\prime
\cdot x = 0\right\}$ of $W$ at $v =0$, which satisfies $\lambda^\prime \cdot
x \leq 0$ for any $x \in W$. Then there must exist $y^\prime \in \mathbb{R}^n
$ such that $\lambda^\prime \cdot y^\prime > 0$ and $\lambda ^v \cdot
y^\prime < 0$. Then $\alpha y^\prime$ is in $W$ for small enough $\alpha \in
(0,1)$ by assumption, but $\lambda^\prime \cdot (\alpha y^\prime) > 0$,
which is a contradiction. Hence there is the unique supporting hyperplane of 
$W$ at $v=0$ and $\lambda^v$ is its normal vector.
\end{proof}

\begin{theorem}
\label{limit} Suppose that $Q^{\eta }$ has an interior point in $\mathbb{R}%
^{n}$. For any $\epsilon >0$, there exists a smooth, compact and
convex set $W\subseteq intQ^{\eta }$ and $\underline{\delta }\in (0,1)$ such
that $W \subset E^{\eta }(\delta )$ for any $\delta \in (\underline{\delta }%
,1)$ and the Hausdorff distance between $W$ and $Q^{\eta }$ is at most $%
\epsilon $.
\end{theorem}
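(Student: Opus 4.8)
The plan is to let the reduction lemmas carry the structure and to concentrate on the one substantive point: that the smooth inner approximation of $Q^{\eta}$ produced by Lemma~\ref{HD} is locally $\eta$-self decomposable. Given $\epsilon>0$, first apply Lemma~\ref{HD} to obtain a smooth, compact, convex $W\subset\mathrm{int}\,Q^{\eta}$ whose Hausdorff distance from $\mathrm{int}\,Q^{\eta}$ is at most $\epsilon$; since $Q^{\eta}$ is an intersection of closed halfspaces it is closed and convex, and having nonempty interior it equals the closure of its interior, so the Hausdorff distance between $W$ and $Q^{\eta}$ is also at most $\epsilon$. Once $W$ is shown to be locally $\eta$-self decomposable, Lemma~\ref{local} yields some $\underline{\delta}\in(0,1)$ with $W\subseteq B(\delta,W,\eta)$ for all $\delta\in(\underline{\delta},1)$, and Lemma~\ref{rec} (applicable since $W$ is nonempty, bounded, and $\eta$-self decomposable) gives $B(\delta,W,\eta)\subseteq E^{\eta}(\delta)$; hence $W\subseteq E^{\eta}(\delta)$ for every $\delta\in(\underline{\delta},1)$, which is the claim.

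To prove local $\eta$-self decomposability, fix $v\in W$. If $v$ lies on the boundary of $W$, let $\lambda\in\Lambda$ be the normal to the unique supporting hyperplane of $W$ at $v$; if $v\in\mathrm{int}\,W$, let $\lambda\in\Lambda$ be arbitrary. Because $W\subset\mathrm{int}\,Q^{\eta}\subseteq H^{\eta}(\lambda)$ and $\|\lambda\|=1$, we get $\lambda\cdot v<k^{\eta}(\lambda)$, so there is a solution $(\hat v,(a,\rho),\hat x)$ feasible for $(P^{\lambda,\eta})$ with $\lambda\cdot\hat v>\lambda\cdot v$. Put $e(m):=v+\hat x(m)-\hat v$; the halfspace constraint $\lambda\cdot\hat x(m)\le 0$ together with $\lambda\cdot\hat v>\lambda\cdot v$ gives $\lambda\cdot\bigl(v+e(m)\bigr)<\lambda\cdot v$ for every $m\in M$. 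Hence, by Lemma~\ref{smooth} applied to each point $v+e(m)$ when $v$ is a boundary point (and trivially, since $\mathrm{int}\,W$ is open, when $v$ is interior), there is $\beta_{0}\in(0,1)$---taken below the finitely many thresholds indexed by $m\in M$---with $v+\beta_{0}e(m)\in\mathrm{int}\,W$ for all $m$, and then a radius $r>0$ such that the closed ball of radius $r$ about each $v+\beta_{0}e(m)$ lies in $\mathrm{int}\,W$. Now set $\delta_{v}:=1/(1+\beta_{0})\in(0,1)$, let $U_{v}$ be the open ball of radius $r/(1+\beta_{0})$ around $v$, and for $v'\in U_{v}$ define the candidate continuation payoffs $w(m):=\tfrac{1}{\delta_{v}}v'+\tfrac{1-\delta_{v}}{\delta_{v}}\bigl(\hat x(m)-\hat v\bigr)$.

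Three short computations then complete the step. First, using $\tfrac{1-\delta_{v}}{\delta_{v}}=\beta_{0}$ one rewrites $w(m)=\bigl(v+\beta_{0}e(m)\bigr)+(1+\beta_{0})(v'-v)$, so $\|w(m)-(v+\beta_{0}e(m))\|=(1+\beta_{0})\|v'-v\|<r$ and therefore $w(m)\in\mathrm{int}\,W\subseteq W$. Second, since $\hat v=g(a)+E[\hat x(\cdot)|(a,\rho)]$, a direct substitution gives $(1-\delta_{v})g(a)+\delta_{v}E[w(\cdot)|(a,\rho)]=v'$, so $\bigl((a,\rho),w\bigr)$ decomposes $v'$ with respect to $W$ and $\delta_{v}$, provided the incentive inequalities hold. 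Third, because $w$ is an affine injection of $\hat x$, the nontrivial deviations with respect to $w$ are exactly those in $\hat{\Sigma}_{i}^{(a,\rho),\hat x}$, and plugging the formula for $w$ into the $\eta$-enforceability inequalities collapses them---after cancelling $v'_{i}$ and dividing by $1-\delta_{v}>0$---to precisely the $\eta$-strict incentive constraints of $(P^{\lambda,\eta})$ that $(\hat v,(a,\rho),\hat x)$ already satisfies. Thus every $v'\in U_{v}$ lies in $B(\delta_{v},W,\eta)$, so $U_{v}\cap W\subseteq B(\delta_{v},W,\eta)$, and $W$ is locally $\eta$-self decomposable; applying Lemma~\ref{local} and Lemma~\ref{rec} as above finishes the proof.

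The main obstacle is exactly the construction in the last two paragraphs: one must hold every continuation payoff $w(m)$ inside $W$ while simultaneously preserving the payoff identity and the $\eta$-strict incentives, and at a boundary point of $W$ this is where smoothness must be used, via Lemma~\ref{smooth}, hand in hand with the slack $\lambda\cdot v<k^{\eta}(\lambda)$ coming from $v$ being interior to $Q^{\eta}$; moreover a single $\delta_{v}$ and a single neighborhood $U_{v}$ have to serve for all of the finitely many messages at once, which is why $\beta_{0}$ is chosen below all the thresholds and $r$ as the minimum of the associated radii. The incentive bookkeeping itself is painless, since under the affine change of variables $x\leftrightarrow w$ the $\eta$-wedge in $(P^{\lambda,\eta})$ passes verbatim to the $\eta$-wedge in the definition of $\eta$-enforceability, so no incentive slack is lost in the translation.
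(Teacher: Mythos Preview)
Your proposal is correct and follows essentially the same route as the paper: both obtain $W$ from Lemma~\ref{HD}, reduce to local $\eta$-self decomposability via Lemma~\ref{local} (with Lemma~\ref{rec} to finish), and establish the latter by translating a feasible solution of $(P^{\lambda,\eta})$ so that it is centered at the target point, scaling by $\frac{1-\delta}{\delta}$, and invoking Lemma~\ref{smooth} at boundary points to land the continuation payoffs in $\mathrm{int}\,W$. The only presentational difference is that the paper extends the decomposition to a neighborhood via the homeomorphism $f^{\delta}(\xi)=(1-\delta)g(a)+\delta E[w^{\delta}(\cdot)+\xi\,|\,(a,\rho)]$, whereas you write down the continuation payoffs $w(m)$ for nearby $v'$ directly and verify the ball condition by hand; both arguments are equivalent.
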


\begin{proof}
By Lemma \ref{HD}, there exists a smooth, compact and convex set $%
W $ in $intQ^{\eta }$ such that the Hausdorff distance between $W$ and int$%
Q^{\eta }$ is at most $\epsilon $. Since $cl(intQ^{\eta })=Q^{\eta },$ it
follows that the Hausdorff distance between $W$ and $Q^{\eta }$ is at most $%
\epsilon $. By Lemma \ref{local}, it suffices to show that $W$ is locally $\eta$-self decomposable. 
Take any boundary point $w^{\ast }$ of $W$. Since 
$W$ is smooth, there is unique $\lambda ^{\ast }\in \Lambda $ such that $%
\lambda ^{\ast }\cdot w^{\ast }=\max_{w\in W}\lambda ^{\ast }\cdot w$. Since 
$W\subseteq intQ^{\eta }$, there exists a feasible solution $(v,(a,\rho ),x)$
for the programming problem $\left( P^{\lambda ^{\ast },\eta }\right) $ such
that $\lambda ^{\ast }\cdot v>\lambda ^{\ast }\cdot w^{\ast }$. Define $%
x^{\prime }(m)=x(m)-(v-w^{\ast })$ for each $m\in M$. Then $(w^{\ast
},(a,\rho ),x^{\prime })$ is feasible in the programming problem because the 
$\eta $-strict incentive constraints for nontrivial deviations are not
affected, $\lambda ^{\ast }\cdot x^{\prime }(m)=\lambda ^{\ast }\cdot
x(m)-\lambda ^{\ast }\cdot (v-w^{\ast })<0$ for each $m\in M$, and 
\begin{align*}
& g\left( a\right) +E[x^{\prime }(\cdot )|(a,\rho )] \\
& =g\left( a\right) +\sum_{m\in M}x(m)\tilde{p}\left( m|(a,\rho )\right)
-v+w^{\ast } \\
& =w^{\ast }
\end{align*}

For each $\delta $ and $m$, define $w^{\delta }(m)$ by $w^{\delta
}(m):=w^{\ast }+\frac{1-\delta }{\delta }x^{\prime }(m)$. If $(a,\rho )$ is
played and $w^{\delta }$ is used as the continuation payoff profile, then,
for all $(a_{i}^{\prime },\rho _{i}^{\prime })\in \hat{\Sigma}_{i}^{(a,\rho
),w^{\delta }}=\hat{\Sigma}_{i}^{(a,\rho ),x^{\prime }}$ and $i\in N$,
\begin{align*}
& \left( 1-\delta \right) g_{i}\left( a\right) +\delta E[w_{i}^{\delta
}\left( \cdot \right) |(a,\rho )]-(1-\delta )\eta \\
& =\left( 1-\delta \right) \left( g_{i}\left( a\right) +E\left[
x_{i}^{\prime }(\cdot )|(a,\rho )\right] \right) +\delta w_{i}^{\ast
}-(1-\delta )\eta \\
& \geq \left( 1-\delta \right) \left( g_{i}\left( a_{i}^{\prime
},a_{-i}\right) +E\left[ x_{i}^{\prime }(\cdot )|(a_{i}^{\prime },\rho
_{i}^{\prime }),(a_{-i},\rho _{-i})\right] +\eta \right) +\delta w_{i}^{\ast
}-(1-\delta )\eta \\
& =\left( 1-\delta \right) \left( g_{i}\left( a_{i}^{\prime },a_{-i}\right)
+E\left[ \frac{\delta }{1-\delta }\left( w_{i}^{\delta }(\cdot )-w_{i}^{\ast
}\right) |(a_{i}^{\prime },\rho _{i}^{\prime }),(a_{-i},\rho _{-i})\right]
\right) +\delta w_{i}^{\ast } \\
& =\left( 1-\delta \right) g_{i}\left( a_{i}^{\prime },a_{-i}\right) +\delta
E\left[ w_{i}^{\delta }(\cdot )|(a_{i}^{\prime },\rho _{i}^{\prime
}),(a_{-i},\rho _{-i})\right] .
\end{align*}
Also note that $(1-\delta )g(a)+\delta E[w^{\delta }(\cdot )|(a,\rho
)]=w^{\ast }$.

Next we show that $w^{\delta }(m)$ is in $intW$ for every $m$ if $\delta $
is large enough. Since $W$ is smooth, $\lambda^{\ast }$ is a normal vector
of the unique supporting hyperplane of $W$ at $w^{\ast}$. Choose any $%
\delta^{\prime }\in (0,1)$. Since $\lambda ^{\ast }\cdot x^{\prime }(m)<0$,
it follows that $\lambda ^{\ast }\cdot w^{\delta ^{\prime }}(m)<\lambda
^{\ast }\cdot w^{\ast }$ for each $m\in M$. Since $\frac{\delta^\prime}{%
1-\delta^\prime}(w^{\delta^{\prime}}(m) - w^{\ast }) = \frac{\delta }{%
1-\delta }(w^{\delta}(m) - w^{\ast })$ for any $\delta, \delta^\prime \in
(0,1)$ by definition, we have 
\begin{equation*}
w^{\delta }(m)=\left( 1-\frac{(1-\delta )\delta ^{\prime }}{\delta (1-\delta
^{\prime })}\right) w^{\ast }+\frac{(1-\delta )\delta ^{\prime }}{\delta
(1-\delta ^{\prime })}w^{\delta ^{\prime }}(m)
\end{equation*}%
for $\delta \in (\delta^{\prime },1)$. Then, for each $m$, there exists $%
\underline{\delta }_{m}\in (0,1)$ such that $w^{\delta }(m)\in int(W)$ for
any $\delta \in (\underline{\delta }_{m},1)$ by Lemma \ref{smooth}. Let $%
\underline{\delta }=\max_{m}\underline{\delta }_{m}$. Then $((a,\rho
),w^{\delta })$ $\eta$-decomposes $w^{\ast }$ with respect to $int(W)$ and $%
\delta $ for any $\delta >\underline{\delta }$.

\bigskip

For each $\xi \in \mathbb{R}^{n}$, let $f^{\delta }(\xi )=(1-\delta
)g(a)+\delta E[w^{\delta }(\cdot )+\xi |(a,\rho )].$ Then $f^{\delta }$ is
continuous, injective and $f(0)=w^{\ast }.$ Since $w^{\delta }(m)\in intW$
for each $m\in M$ and $M$ is finite, there exists an open neighborhood $%
V^{\delta }$ of $0$ such that $w^{\delta }(m)+\xi \in intW$ for each $m\in M$
and each $\xi \in V^{\delta }$ and $f^{\delta }(V^{\delta })=U^{\delta }$ is
an open neighborhood of $w^{\ast }.$ Since $f^{\delta }$ maps V$^{\delta }$
homeomorphically onto $U^{\delta }$, it follows that every point $u\in
U^{\delta }$ can be $\eta$-decomposed by $((a,\rho ),w^{\delta }+(f^{\delta
})^{-1}(u))$ with respect to $int(W)$ for each $\delta >\underline{\delta }$%
. Therefore, $U^{\delta }\cap W\subseteq U^{\delta }\subseteq B\left( \delta
,intW,\eta \right) \subseteq B\left( \delta ,W,\eta \right) $ for any $%
\delta >\underline{\delta }$. A similar argument applies for any $w^\ast \in
intW.$ Hence $W$ is locally $\eta $-self decomposable.
\end{proof}

\section{Uniformly Strict Folk Theorem}

In this section, we prove a folk theorem with $\eta$-uniformly strict PPE by
showing that $Q^\eta$ coincides with $V^*(G)$ under certain conditions. In
the following, we use $\phi_{i}$ to denote a mixed strategy over $%
A_{i}\times R_{i}$. Let $\alpha _{i}(\phi _{i})$ denote the marginal
distribution of $\phi _{i}$ on $A_{i}$. For each $(a_{-i},\rho _{-i}),$ let 
\begin{equation*}
\tilde{p}(\cdot |\phi _{i},(a_{-i},\rho _{-i}))=\sum_{(a_{i}^{\prime },\rho
_{i}^{\prime })\in A_{i}\times R_{i}}\tilde{p}(\cdot | (a_{i}^{\prime },\rho
_{i}^{\prime }),(a_{-i},\rho _{-i}))\phi _{i}\left((a_{i}^{\prime },\rho
_{i}^{\prime })\right)
\end{equation*}
and let $\tilde{p}_{-i}(\cdot |\phi _{j},(a_{-j},\rho _{-j}))$ be the
marginal distribution of $\tilde{p}(\cdot |\phi _{j},(a_{-j},\rho _{-j}))$
over $M_{-i}$.

We need the following four conditions on the private monitoring structure
and the payoff functions for our folk theorem. Recall that $A(G)\subseteq A$
is the set of action profiles that generate an extreme point in $V(G).$

\begin{definition}[$\protect\eta$-detectability]
For each $a \in A(G)$, there exists $\rho \in R$ that satisfies the
following condition: for each $i \in N$, if $\tilde{p}(\cdot |\phi
_{i},(a_{-i},\rho _{-i}))=\tilde{p}(\cdot |(a,\rho ))$ for some $\phi
_{i}\in \Delta\left((A_i \times R_i)\backslash \{(a_i, \rho_i)\}\right)$,
then $g_{i}(a)-g_{i}(\alpha _{i}(\phi _{i}),a_{-i})\geq \eta $ holds.
\end{definition}

This condition means that if player $i$'s unilateral deviation to a mixed
strategy (with $0$ probability on $(a_i, \rho_i)$) cannot be detected, then
she must lose at least $\eta$ in terms of the stage-game expected payoff

When we approximate the minmax point, we need a slightly stronger
detectability condition.

\begin{definition}[$\protect\eta^*$-detectability with respect to $i$ at $a
\in A$]
There exists $\rho \in R$ that satisfies the following condition: for each $%
j \neq i$, if $\tilde{p}_{-i}(\cdot |\phi _{j},(a_{-j},\rho _{-j}))= \tilde{p%
}_{-i}(\cdot |(a,\rho ))$ for some $\phi _{j} \in \Delta\left((A_j \times
R_j)\backslash\{(a_j, \rho_j)\}\right)$, then $g_{j}(a)-g_{j}(\alpha
_{j}(\phi _{j}),a_{-j}) \geq \eta$ holds.
\end{definition}

This condition means that the above $\eta$-detectability condition holds for 
$j \neq i$ without using player $i$'s message.

The next condition means that, if player $i$'s deviation is not linearly
independent from some other player's deviation, then she must lose at least $%
\eta$ in terms of the stage-game expected payoff.

\begin{definition}[$\protect\eta$-identifiability]
For each $a \in A(G)$, there exists $\rho \in R$ that satisfies the
following condition: for each pair $i \neq j$, if $\tilde{p}(\cdot |\phi
_{i},(a_{-i},\rho_{-i}))-\tilde{p}(\cdot |(a,\rho ))$ and $\tilde{p}(\cdot
|\phi_{j},(a_{-j},\rho _{-j}))-\tilde{p}(\cdot |(a,\rho ))$ are not linearly
independent for some $\phi _{i} \in \Delta\left((A_i \times
R_i)\backslash\{(a_i, \rho_i)\}\right)$ and $\phi _{j} \in
\Delta\left((A_j\times R_j)\backslash\{(a_j, \rho_j)\}\right)$, then $%
\min\{g_{i}(a)-g_{i}(\alpha _{i}(\phi _{i}),a_{-i}), g_j(a)-g_{j}(\alpha
_{j}(\phi _{j}),a_{-j}) \} \geq \eta $ holds.
\end{definition}

The last conditions require that, for every player $i$, there exists the
best action profile and the minmax action profile, where player $i$ would
lose at least $\eta$ by deviating to any other pure action. Remember Lemma %
\ref{bound} and our discussion following the lemma; we know that they are
necessary for the folk theorem.

\begin{definition}[$\protect\eta$-best response property]
$G$ satisfies $\eta$-best response property for $\left\{\overline{a}^i, 
\underline{a}^i, i \in N \right\} \subset A$ if the following conditions are
satisfied for any $i \in N$:

\begin{enumerate}
\item $g_i(\overline{a}^i) =\max_{a}g_i(a)$ and $g_i(\overline{a}^i) -
g_i(a_i^\prime, \overline{a}_{-i}^{i}) \geq \eta$ for any $a_i^\prime \neq 
\overline{a}^i_i$.

\item $g_i(\underline{a}^i) = \min_{a_{-i}} \max_{a_i} g_i(a_i, a_{-i})$ and 
$g_i(\underline{a}^i) - g_i(a_i^\prime, \underline{a}_{-i}^{i}) \geq \eta$
for any $a_i^\prime \neq \underline{a}^i_i$.
\end{enumerate}
\end{definition}

It may be useful to compare these conditions to the similar conditions
(A1)-(A3) for Theorem 1 in \cite{km}. (A1) requires $0^*$-detectability
condition at the minmax action profile. We instead assume $\eta^*$%
-detectability condition at the minmax action profile and the best action
profile for each player. Kandori and Matsushima \cite{km}
assumes (A2) and (A3) for every action profile in $A(G)$, which is a
restriction on the distribution of the private signals of any subset of $n-2$
players, whereas we assume $\eta$-detectability and $\eta$-identifiability
for every $a \in A(G)$, which is a restriction on the joint distribution of all messages. 
$\eta$-detectability and $\eta$-identifiability are weaker than
(A2) and (A3) when the message space is rich enough in the following sense.
For $\eta$-identifiability, if (A2) and (A3) are satisfied at $a$ and $M_i
=S_i$ for every $i \in N$, then $\eta$-identifiability is automatically satisfied
with truthful message strategies, since the type of linear dependency that
appears in the definition of $\eta$-identifiability would never occur given
(A2) and (A3). For the same reason, (A2) implies $\eta$-detectability for every $a \in A(G)$.

As an example of monitoring structure that satisfies our conditions,
consider $p$ that satisfies the individual full rank condition for each
player with respect to the other players' signals and the pairwise full rank
condition for every pair of players with respect to the private signals of
the other $n-2$ players. Then (A2) and (A3) are satisfied. In addition, $\eta^*$-detectability is trivially satisfied. Hence all our conditions on the monitoring structure ($\eta^*$-detectability and $\eta$-detectability \& $\eta$-identifiability) are satisfied.

Using these conditions, we can state our folk theorem with $\eta$-USPPE using as follows.

\begin{theorem}
\label{folk} Fix any private monitoring game $\left(G, p\right)$. Suppose
that $int V^*(G) \neq \emptyset$ and $G$ satisfies $\eta$-best response
property for $\left\{\overline{a}^i, \underline{a}^i, i \in N \right\}
\subset A$. If $\left(G, p\right)$ satisfies both $\eta$-detectability and $%
\eta$-identifiability with the same $\rho^a \in R$ for each $a \in A(G)$ and
satisfies $\eta^*$-detectability at $\overline{a}_i$ and $\underline{a}_i$
with respect to $i$ for every $i \in N$, then, for any $\epsilon > 0$, there
exists a smooth, compact and convex set $W \subseteq int V^*(G)$ and $%
\underline{\delta }\in (0,1)$ such that $W\subset E^{\eta}(\delta )$ for any 
$\delta \in (\underline{\delta}, 1)$ and the Hausdorff distance between $W$
and $V^*(G)$ is at most $\epsilon$.
\end{theorem}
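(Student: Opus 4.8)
The plan is to derive the theorem from Theorem \ref{limit} by proving that $Q^{\eta}=V^{*}(G)$. Granting this, $Q^{\eta}$ has an interior point (equal to $intV^{*}(G)\neq\emptyset$), so Theorem \ref{limit} produces a smooth, compact, convex $W\subseteq intQ^{\eta}=intV^{*}(G)$ and a $\underline{\delta}\in(0,1)$ with $W\subseteq E^{\eta}(\delta)$ for $\delta\in(\underline{\delta},1)$ and Hausdorff distance from $W$ to $Q^{\eta}=V^{*}(G)$ at most $\epsilon$, which is exactly the assertion. Since $Q^{\eta}$ is closed and convex and $cl(intV^{*}(G))=V^{*}(G)$, the identity $Q^{\eta}=V^{*}(G)$ follows from the two inclusions $V^{*}(G)\subseteq Q^{\eta}$ and $Q^{\eta}\subseteq V^{*}(G)$. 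The inclusion $Q^{\eta}\subseteq V^{*}(G)$ is immediate: for any $\lambda$ and any feasible $(v,a,\rho,x)$ for $(P^{\lambda,\eta})$ the budget constraint gives $\lambda\cdot v=\lambda\cdot g(a)+E[\lambda\cdot x(\cdot)|(a,\rho)]\leq\lambda\cdot g(a)\leq\max_{a'\in A}\lambda\cdot g(a')$, so $k^{\eta}(\lambda)\leq\max_{a'}\lambda\cdot g(a')$ and hence $Q^{\eta}\subseteq V(G)$; and, by condition 2 of the $\eta$-best response property together with Lemma \ref{minmax}, $\underline{v}_{i}^{\eta}=0$ for every $i$, so Lemma \ref{bound} gives $k^{\eta}(-e^{i})\leq0$, i.e.\ $H^{\eta}(-e^{i})\subseteq\{v:v_{i}\geq0\}$. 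Intersecting yields $Q^{\eta}\subseteq V(G)\cap\{v:v\geq0\}=V^{*}(G)$.

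For the reverse inclusion I would show $V^{*}(G)\subseteq H^{\eta}(\lambda)$ for every $\lambda\in\Lambda$. Since $\max_{v\in V^{*}(G)}\lambda\cdot v\leq\max_{a\in A}\lambda\cdot g(a)$, it suffices for each $\lambda\notin\{-e^{i}:i\in N\}$ to exhibit a feasible point of $(P^{\lambda,\eta})$ with $\lambda\cdot v=\max_{a}\lambda\cdot g(a)$, and for $\lambda=-e^{i}$ a feasible point with $v_{i}=0$ (which with the bound above forces $k^{\eta}(-e^{i})=0$). For $\lambda=e^{i}$, take $a=\overline{a}^{i}$ and let $\rho$ be the message profile provided by $\eta^{*}$-detectability at $\overline{a}^{i}$ with respect to $i$; set $x_{i}\equiv0$ and let each $x_{j}$, $j\neq i$, depend only on $m_{-i}$. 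Then all of player $i$'s $\eta$-strict constraints hold by condition 1 of the $\eta$-best response property (every action deviation costs at least $\eta$), and every message deviation by $i$ is trivial with respect to $x$ because $x$ does not depend on $m_{i}$ (this is exactly why detection must be possible without $i$'s message). For $j\neq i$, the existence of an enforcing $x_{j}:M_{-i}\to\mathbb{R}$ with $E[x_{j}|(a,\rho)]=0$ follows from a Farkas/LP-duality argument: the linear system can fail to be solvable only if some $\phi_{j}\in\Delta((A_{j}\times R_{j})\setminus\{(a_{j},\rho_{j})\})$ has $\tilde{p}_{-i}(\cdot|\phi_{j},(a_{-j},\rho_{-j}))=\tilde{p}_{-i}(\cdot|(a,\rho))$, and $\eta^{*}$-detectability forces any such $\phi_{j}$ to satisfy $g_{j}(a)-g_{j}(\alpha_{j}(\phi_{j}),a_{-j})\geq\eta$, which is precisely the condition ruling that failure out. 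The resulting point has $v_{i}=g_{i}(\overline{a}^{i})=\max_{a'}g_{i}(a')$, so $k^{\eta}(e^{i})=\max_{a'}g_{i}(a')$. For $\lambda=-e^{i}$ the same construction with $a=\underline{a}^{i}$ (using condition 2 and $\eta^{*}$-detectability at $\underline{a}^{i}$) gives a feasible point with $v_{i}=g_{i}(\underline{a}^{i})=0$. In both cases $V^{*}(G)\subseteq H^{\eta}(\pm e^{i})$.

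For a non-coordinate direction $\lambda\in\Lambda$, $\lambda\notin\{\pm e^{i}:i\in N\}$, I would pick $a^{\dagger}\in A(G)$ attaining $\max_{a\in A}\lambda\cdot g(a)$ and let $\rho^{\dagger}=\rho^{a^{\dagger}}$ be the common message profile witnessing $\eta$-detectability and $\eta$-identifiability at $a^{\dagger}$, and then build transfers $x:M\to\mathbb{R}^{n}$ with $\lambda\cdot x(m)=0$ for all $m$ (so the budget binds and $\lambda\cdot v=\lambda\cdot g(a^{\dagger})$) that $\eta$-enforce $(a^{\dagger},\rho^{\dagger})$ against every nontrivial deviation; this gives $k^{\eta}(\lambda)=\max_{a}\lambda\cdot g(a)$ and $V^{*}(G)\subseteq H^{\eta}(\lambda)$, and intersecting over all $\lambda$ completes $V^{*}(G)\subseteq Q^{\eta}$, hence $Q^{\eta}=V^{*}(G)$. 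The construction follows the scheme of \cite{flm}: first, for each player $i$ separately, the Farkas argument of the previous paragraph---now using $\eta$-detectability at $a^{\dagger}$---produces $\tilde{x}^{i}:M\to\mathbb{R}$ with $E[\tilde{x}^{i}|(a^{\dagger},\rho^{\dagger})]=0$ that $\eta$-enforces $i$; then one assembles these into a budget-balanced $x$ by pairing the players (a player $i$ with $\lambda_{i}=0$ is enforced by a transfer supported on coordinate $i$ alone, which is automatically $\lambda$-neutral, and each remaining player is charged to a partner with nonzero $\lambda$-weight), requiring each enforcing transfer to be orthogonal to its partner's deviation differences in the message distributions. Whenever that orthogonality cannot be arranged, the two players' aggregate deviations are not linearly independent, so $\eta$-identifiability guarantees each of them already loses at least $\eta$ in the stage game and the $\eta$-strict constraints hold regardless; one also checks that message-only deviations that alter the distribution on $M$, hence are nontrivial, are strictly deterred---undetectable ones are in fact trivial with respect to $x$, and undetectable deviations that also move the action lose at least $\eta$ by $\eta$-detectability. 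I expect this last, budget-balancing step to be the main obstacle: it is the $\eta$-strict analogue of the folk-theorem argument of \cite{flm}, and it must be carried out with \emph{strict} incentives and with the generically large message-strategy spaces $R_{i}$, with $\eta$-detectability and $\eta$-identifiability taking over the roles of individual and pairwise full rank and absorbing their failures into stage-payoff losses of size $\eta$.
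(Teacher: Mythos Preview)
Your overall plan---prove $Q^{\eta}=V^{*}(G)$ and invoke Theorem \ref{limit}---is exactly the paper's, and your treatment of the coordinate directions $\pm e^{i}$ coincides with the paper's Lemma \ref{sublemma2} (including the use of $\eta^{*}$-detectability to make all transfers independent of $m_{i}$, so that player $i$'s message deviations become trivial and condition 1 or 2 of the $\eta$-best response property supplies the $\eta$-gap).

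The one substantive difference is your handling of non-coordinate $\lambda$. You propose the FLM-style route: build enforcing transfers $\tilde{x}^{i}$ player by player via Farkas, then pair players and impose orthogonality of each player's transfer to the partner's deviation differences, with $\eta$-identifiability absorbing the cases where orthogonality fails. The paper instead (Lemma \ref{sublemma1}) writes down a \emph{single} linear feasibility problem with the budget constraint $\sum_{i}\lambda_{i}x_{i}(m)=0$ already built in and applies strong duality. The dual constraints force, for each $i$, the $q_{i}$-weighted aggregate of $i$'s deviation differences to equal $\lambda_{i}d(\cdot)$ for a common function $d$. When $\lambda_{i}=0$ this says the aggregate deviation is undetectable, and $\eta$-detectability bounds the $i$th summand of the dual objective by $0$; when $\lambda_{i},\lambda_{j}\neq 0$ the two aggregate deviations are automatically proportional (both multiples of $d$), and $\eta$-identifiability bounds the $i$th and $j$th summands by $0$. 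Hence the dual value is $0$ and the primal is feasible, yielding $k^{\eta}(\lambda)=\max_{a}\lambda\cdot g(a)$ directly. This sidesteps your ``main obstacle'' entirely: there is no pairing, no orthogonality bookkeeping, and no separate treatment of message-only deviations---the LP enforces $\eta$-strict IC against every $(a_{i}',\rho_{i}')\neq(a_{i},\rho_{i})$, which a fortiori covers all nontrivial deviations with respect to the resulting $x$. Your constructive approach can be made to work (in fact the Farkas alternative for the paired system reduces to exactly the same proportionality condition), but the paper's formulation is shorter and shows more transparently why $\eta$-detectability and $\eta$-identifiability are precisely the right hypotheses.
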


We prove this theorem through a series of lemma. We first observe that $\eta$%
-detectability is equivalent to the existence of a transfer $x$ that
guarantees $\eta$-strict incentive compatibility.

\begin{lemma}
\label{Farkas} $(G, p)$ satisfies $\eta$-detectability if and only if for
any $a \in A(G)$, there exists $\rho$ such that there exists $x_i: M
\rightarrow \mathbb{R}$ for each $i \in N$ that satisfies 
\begin{align*}
& g_{i}\left( a \right) +\frac{\delta }{1-\delta }\sum_{m\in M}x_i\left(
m\right) \tilde{p}\left( m|(a,\rho)\right) -\eta \geq \\
& g_{i}\left( a_{i}^{\prime },a_{-i}\right) +\frac{\delta }{1-\delta }%
\sum_{m\in M}x_i(m)\tilde{p}(m|(a_{i}^{\prime },\rho _{i}^{\prime
}),(a_{-i},\rho _{-i}))\ \forall (a_{i}^{\prime }, \rho_i^\prime) \neq (a_i,
\rho_i)
\end{align*}
\end{lemma}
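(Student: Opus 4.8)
The plan is to recognize this as a Farkas-lemma (linear-programming duality / theorem-of-the-alternative) statement: fix $a \in A(G)$ and a candidate $\rho$, and for each player $i$ consider the finite system of linear inequalities in the unknown vector $x_i \in \mathbb{R}^M$ given in the statement. I would first absorb the constant $\tfrac{\delta}{1-\delta}$ and the constant $\eta$ into the data (note that since $\delta \in (0,1)$ is fixed and positive, solvability of the system is unchanged if we rescale $x_i$; equivalently, work directly with the ``normalized'' transfer $y_i(m) = \tfrac{\delta}{1-\delta} x_i(m)$). Then for each $i$ and each nontrivial deviation $(a_i', \rho_i') \neq (a_i, \rho_i)$ the constraint reads
\begin{equation*}
\sum_{m \in M} y_i(m)\bigl[\tilde p(m \mid (a,\rho)) - \tilde p(m \mid (a_i',\rho_i'),(a_{-i},\rho_{-i}))\bigr] \geq \eta - \bigl(g_i(a) - g_i(a_i',a_{-i})\bigr).
\end{equation*}
By Farkas's lemma, such a $y_i$ fails to exist if and only if there is a nonnegative combination of these inequalities with the left-hand sides cancelling to zero but the right-hand sides summing to something strictly positive; a nonnegative combination with total weight normalizing to $1$ is exactly a probability distribution $\phi_i$ over the set of deviations $(A_i \times R_i) \setminus \{(a_i,\rho_i)\}$.

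Carrying this out, infeasibility of player $i$'s system is equivalent to the existence of $\phi_i \in \Delta\bigl((A_i \times R_i)\setminus\{(a_i,\rho_i)\}\bigr)$ such that $\sum_{(a_i',\rho_i')} \phi_i(a_i',\rho_i')\bigl[\tilde p(\cdot \mid (a,\rho)) - \tilde p(\cdot \mid (a_i',\rho_i'),(a_{-i},\rho_{-i}))\bigr] = 0$, i.e. $\tilde p(\cdot \mid \phi_i,(a_{-i},\rho_{-i})) = \tilde p(\cdot \mid (a,\rho))$, AND $\sum_{(a_i',\rho_i')} \phi_i(a_i',\rho_i')\bigl[\eta - (g_i(a) - g_i(a_i',a_{-i}))\bigr] > 0$. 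Using linearity of $g_i$ in the first argument, the second condition is $\eta - \bigl(g_i(a) - g_i(\alpha_i(\phi_i), a_{-i})\bigr) > 0$, i.e. $g_i(a) - g_i(\alpha_i(\phi_i),a_{-i}) < \eta$. So: for a fixed $\rho$, a transfer $x_i$ for player $i$ exists if and only if there is no such undetectable-and-cheap deviating distribution $\phi_i$. Taking this across all $i$, the existence of transfers for all players at $(a,\rho)$ is equivalent to: for every $i$, every $\phi_i$ with $\tilde p(\cdot \mid \phi_i,(a_{-i},\rho_{-i})) = \tilde p(\cdot\mid(a,\rho))$ satisfies $g_i(a) - g_i(\alpha_i(\phi_i),a_{-i}) \geq \eta$. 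Quantifying over the existence of some such $\rho$ for each $a \in A(G)$ yields precisely the definition of $\eta$-detectability, and the equivalence is proved in both directions simultaneously.

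I would need to be mildly careful about two technical points. First, the exact form of Farkas's lemma: the system here is a ``$\geq$'' system $A^\top y \geq b$ with $y$ \emph{unrestricted} in sign, whose alternative is ``$\exists \,\phi \geq 0$ with $A\phi = 0$ and $b^\top \phi > 0$''; I should cite the appropriate version (or the Motzkin/Gordan transposition theorem) rather than the homogeneous one. Second, I should observe that the ``only if'' direction is the easy one — if $\eta$-detectability holds with some $\rho$, run Farkas in the other direction to produce $x_i$ — and the ``if'' direction is where the alternative characterization does the work; in both cases the passage between ``nonnegative combination of inequalities'' and ``probability distribution over deviations'' is the substantive observation. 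The main (only mild) obstacle is bookkeeping: matching the support restriction ``$\phi_i$ puts zero weight on $(a_i,\rho_i)$'' with the fact that the inequality indexed by $(a_i,\rho_i)$ is vacuous (it holds with equality and zero slack after subtracting $\eta$, but $\eta > 0$, so it is genuinely excluded — which is exactly why the Farkas multipliers range over deviations only), and confirming that trivial deviations, excluded from $\hat\Sigma$, do not need to be included because they contribute the zero vector to the coefficient matrix and hence can never be part of a violating combination. Everything else is routine linear algebra.
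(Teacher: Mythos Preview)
Your proposal is correct and is precisely the standard Farkas/theorem-of-the-alternative argument the paper has in mind: the paper omits the proof as ``standard'' and points to the corresponding result in Kandori and Matsushima \cite{km} (p.~650), which is exactly this duality computation. Your normalization of the $\tfrac{\delta}{1-\delta}$ factor, identification of the alternative system with a probability distribution $\phi_i$ over deviations, and matching of the two conditions to the $\eta$-detectability definition are all correct; the only minor remark is that the lemma as stated quantifies over all $(a_i',\rho_i')\neq(a_i,\rho_i)$ rather than over $\hat\Sigma_i$, so your closing comment about excluding trivial deviations is unnecessary here (though harmless, since any such deviation contributes a zero column and the inequality $0\geq\eta$ fails on both sides of the equivalence consistently).
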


Since the proof for this result is standard, it is omitted.\footnote{%
For example, see the proof of the the corresponding result in \cite{km} (p.
650).} By the same argument, we can show that $\eta^*$-detectability with
respect to $i$ is equivalent to the existence of a transfer that does not
depend on player $i$'s message and guarantees $\eta$-strict incentive for
every player other than $i$.

\begin{lemma}
\label{Farkas2} $(G,p)$ satisfies $\eta ^{\ast}$-detectability for $i$ with
respect to $a\in A$ if and only if there exists a $\rho$ and, for each $%
j\neq i$, a function $x_{j}:M_{-i}\rightarrow R$ satisfying 
\begin{align*}
& g_{j}\left( a\right) +\frac{\delta }{1-\delta }\sum_{m_{-i}\in
M_{-i}}x_{j}\left( m_{-i}\right) \tilde{p}_{-i}\left( m_{-i}|(a,\rho
)\right) -\eta \geq \\
& g_{j}\left( a_{j}^{\prime },a_{-j}\right) +\frac{\delta }{1-\delta }%
\sum_{m_{-i}\in M_{-i}}x_{j}(m_{-i})\tilde{p}_{-i}(m_{-i}|(a_{j}^{\prime
},\rho _{j}^{\prime }),(a_{-j},\rho _{-j}))\ \forall (a_{j}^{\prime },\rho
_{j}^{\prime })\neq (a_{j},\rho _{j})
\end{align*}
\end{lemma}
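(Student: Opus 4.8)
The plan is to prove Lemma \ref{Farkas2} by a Farkas-lemma / separating-hyperplane argument that is essentially identical to the one establishing Lemma \ref{Farkas}, applied component-wise to each player $j \neq i$ but with the distribution over $M_{-i}$ rather than over $M$. First I would fix the action profile $a \in A$ and a candidate $\rho \in R$; the whole statement is then the conjunction, over $j \neq i$, of ``there exists $x_j : M_{-i} \to \mathbb{R}$ satisfying the displayed $\eta$-strict incentive inequalities.'' Since the constraints for different $j$ involve disjoint sets of unknowns $\{x_j\}$, I can treat each $j$ separately. For a fixed $j$, rewriting the inequality, the constant $\delta/(1-\delta)$ can be absorbed into $x_j$ (rescale), so the existence of $x_j$ is equivalent to: there is $y_j : M_{-i} \to \mathbb{R}$ with
\begin{equation*}
\sum_{m_{-i} \in M_{-i}} y_j(m_{-i})\bigl[\tilde{p}_{-i}(m_{-i}\mid (a,\rho)) - \tilde{p}_{-i}(m_{-i}\mid (a_j',\rho_j'),(a_{-j},\rho_{-j}))\bigr] \ge \eta - \bigl(g_j(a) - g_j(a_j',a_{-j})\bigr)
\end{equation*}
for every $(a_j',\rho_j') \neq (a_j,\rho_j)$.

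Next I would apply a theorem of the alternative (Farkas / Gale). The system above is infeasible iff there is a nonnegative combination of the vectors $\tilde{p}_{-i}(\cdot\mid(a,\rho)) - \tilde{p}_{-i}(\cdot\mid(a_j',\rho_j'),(a_{-j},\rho_{-j}))$ that equals zero while the corresponding nonnegative combination of the right-hand sides is strictly positive. Normalizing the nonnegative weights to a probability vector $\phi_j$ over the deviations $(a_j',\rho_j') \neq (a_j,\rho_j)$ — i.e. $\phi_j \in \Delta\bigl((A_j\times R_j)\setminus\{(a_j,\rho_j)\}\bigr)$ — the zero-combination condition becomes exactly $\tilde{p}_{-i}(\cdot\mid \phi_j,(a_{-j},\rho_{-j})) = \tilde{p}_{-i}(\cdot\mid(a,\rho))$ by linearity of $\tilde{p}_{-i}$ in the mixed deviation, and the strict-positivity condition becomes $\sum_{(a_j',\rho_j')}\phi_j(a_j',\rho_j')\bigl[\eta - (g_j(a) - g_j(a_j',a_{-j}))\bigr] > 0$, i.e. $g_j(a) - g_j(\alpha_j(\phi_j),a_{-j}) < \eta$ (using that $g_j$ depends on $\rho_j'$ only through $a_j'$, and that $\alpha_j(\phi_j)$ is the $A_j$-marginal). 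Thus, for fixed $\rho$, the existence of $x_j$ for all $j\neq i$ fails iff for some $j\neq i$ there is a $\phi_j$ witnessing undetectability (via $M_{-i}$) with payoff loss strictly below $\eta$ — which is precisely the negation of the $\eta^*$-detectability condition for $i$ at $a$ with that $\rho$. Taking the $\rho$ that the definition of $\eta^*$-detectability provides gives the ``only if'' direction, and running the equivalence in reverse gives the ``if'' direction.

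The main obstacle, such as it is, is bookkeeping rather than mathematical depth: one must be careful that the quantifier over $\rho$ sits outside the per-$j$ Farkas argument (the same $\rho$ must work for all $j\neq i$, which matches the definition), and that the passage between ``nonnegative weights'' and ``probability distribution $\phi_j$'' is handled correctly, including the degenerate case where the only zero-combination uses the trivial deviation — but trivial deviations are excluded from $\hat\Sigma$ and from the support of $\phi_j$ by definition, and since $\tilde{p}_{-i}$ is affine in $\phi_j$ the reduction is clean. Because this is word-for-word the argument of Kandori–Matsushima \cite{km} (p.\ 650) with $M$ replaced by $M_{-i}$ and applied separately to each $j\neq i$, and because Lemma \ref{Farkas} has already been asserted by exactly this route, I would simply remark that the proof is identical to that of Lemma \ref{Farkas} after restricting signals/messages to those of players other than $i$, and omit the routine details.
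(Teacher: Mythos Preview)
Your proposal is correct and takes essentially the same approach as the paper: the paper simply omits the proof, noting that it follows ``by the same argument'' as Lemma~\ref{Farkas}, which in turn is the standard Farkas/separating-hyperplane argument from \cite{km} (p.~650). Your write-up is in fact more detailed than what the paper provides, and your care about the quantifier on $\rho$ and the per-$j$ separability is exactly the right bookkeeping.
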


The next lemma shows that $k^\eta(\lambda)$ is equal to $\max_{a \in A}
\lambda \cdot g(a)$ for any regular $\lambda$ (with at least two nonzero
elements) when $\eta$-detectability and $\eta$-identifiability are
satisfied.

\begin{lemma}
\label{sublemma1} Suppose that $(G,p)$ satisfies $\eta$-detectability and $%
\eta$-identifiability with the same $\rho \in R$ for each $a \in A(G)$.
Then, for any $\lambda \in \Lambda \notin \{\pm e^i, i \in N\}$, $%
k^\eta(\lambda) = \max_{a \in A} \sum_i \lambda_i g_i(a)$.
\end{lemma}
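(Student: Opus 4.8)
My plan is to prove the equality in two directions. The bound $k^{\eta}(\lambda)\le\max_{a\in A}\sum_i\lambda_i g_i(a)$ is immediate and holds for every $\lambda$: for any feasible $(v,a,\rho,x)$ for $(P^{\lambda,\eta})$ one has $\lambda\cdot v=\lambda\cdot g(a)+\sum_m\bigl(\sum_i\lambda_i x_i(m)\bigr)\tilde p(m\mid(a,\rho))\le\lambda\cdot g(a)$ by the last constraint. So the whole content is to show this bound is attained. Since $V(G)$ is a compact polytope, $v\mapsto\lambda\cdot v$ is maximized over $V(G)$ at an extreme point, and every extreme point of $co\{g(a):a\in A\}$ equals $g(a^{*})$ for some $a^{*}\in A(G)$; fix such an $a^{*}$, so $\lambda\cdot g(a^{*})=\max_{a\in A}\sum_i\lambda_i g_i(a)$. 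Let $\rho=\rho^{a^{*}}$ be the message-strategy profile that $\eta$-detectability and $\eta$-identifiability share at $a^{*}$, write $q=\tilde p(\cdot\mid(a^{*},\rho))$, and for a (possibly mixed) unilateral deviation $\phi_i$ of player $i$ write $q(\phi_i)=\tilde p(\cdot\mid\phi_i,(a^{*}_{-i},\rho_{-i}))$. It then suffices to construct transfers $x_i:M\to\mathbb{R}$, $i\in N$, with $\sum_i\lambda_i x_i(m)=0$ for every $m$, such that $(v^{*},a^{*},\rho,x)$ with $v^{*}=g(a^{*})+E[x(\cdot)\mid(a^{*},\rho)]$ satisfies the $\eta$-incentive constraints of $(P^{\lambda,\eta})$: then this point is feasible and $\lambda\cdot v^{*}=\lambda\cdot g(a^{*})$.

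First I would reduce the incentive constraints to the ``detectable'' deviations. If a pure deviation $(a_i',\rho_i')$ with $a_i'\ne a^{*}_i$ leaves the message distribution unchanged, its constraint in $(P^{\lambda,\eta})$ collapses to $g_i(a^{*})-g_i(a_i',a^{*}_{-i})\ge\eta$, which is exactly what $\eta$-detectability gives for the point mass $\phi_i=\delta_{(a_i',\rho_i')}$, independently of $x$; and a deviation with $a_i'=a^{*}_i$ that leaves the message distribution unchanged does not change $x(\cdot)$ and is therefore trivial, i.e.\ not in $\hat\Sigma_i^{(a^{*},\rho),x}$. Hence it is enough to find $x$ with $\sum_i\lambda_i x_i\equiv 0$ satisfying, for every $i\in N$ and every pure deviation $d=(a_i',\rho_i')$ with $\tilde p(\cdot\mid d,(a^{*}_{-i},\rho_{-i}))\ne q$,
\[
\bigl[g_i(a^{*})-g_i(a_i',a^{*}_{-i})-\eta\bigr]-\bigl\langle x_i,\ \tilde p(\cdot\mid d,(a^{*}_{-i},\rho_{-i}))-q\bigr\rangle\ \ge\ 0 .
\]

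The existence of such $x$ is a feasibility question for a finite system of linear equalities and inequalities in $x=(x_i(m))_{i\in N,m\in M}$, which I would settle with Farkas' lemma; this is where the two hypotheses enter (through Lemma~\ref{Farkas} and the identifiability condition). If the system were infeasible, there would be multipliers $\mu_{i,d}\ge0$ on the inequalities and $\nu=(\nu_m)\in\mathbb{R}^{M}$ on the equalities with $\sum_{i,d}\mu_{i,d}(\eta-\Delta_i(d))>0$, where $\Delta_i(d)=g_i(a^{*})-g_i(a_i',a^{*}_{-i})$, together with the stationarity conditions $\sum_d\mu_{i,d}\bigl(q-\tilde p(\cdot\mid d,(a^{*}_{-i},\rho_{-i}))\bigr)=-\lambda_i\nu$ for every $i\in N$. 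Writing $M_i=\sum_d\mu_{i,d}$ and, when $M_i>0$, $\phi_i=M_i^{-1}\sum_d\mu_{i,d}\delta_d$ (a distribution supported on $(A_i\times R_i)\setminus\{(a^{*}_i,\rho_i)\}$), linearity of $\tilde p$ in the deviator's strategy turns stationarity into $M_i\bigl(q(\phi_i)-q\bigr)=\lambda_i\nu$, and the objective condition into $\sum_i M_i\bigl[\eta-(g_i(a^{*})-g_i(\alpha_i(\phi_i),a^{*}_{-i}))\bigr]>0$. If $\nu=0$, then every $i$ with $M_i>0$ has $q(\phi_i)=q$, so $\eta$-detectability forces $g_i(a^{*})-g_i(\alpha_i(\phi_i),a^{*}_{-i})\ge\eta$ and each summand is $\le0$ --- contradiction. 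If $\nu\ne0$, then $M_i>0$ whenever $\lambda_i\ne0$ (otherwise stationarity gives $\lambda_i\nu=0$), and since $\|\lambda\|=1$ and $\lambda\notin\{\pm e^i\}$ there are at least two such players; for each of them $q(\phi_i)-q=M_i^{-1}\lambda_i\nu$, so these changes are all parallel, hence pairwise linearly dependent, and $\eta$-identifiability at $a^{*}$ forces $g_i(a^{*})-g_i(\alpha_i(\phi_i),a^{*}_{-i})\ge\eta$ for each; players with $\lambda_i=0$ have $q(\phi_i)=q$ by stationarity and are again handled by $\eta$-detectability, so every summand is $\le0$ --- contradiction. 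Therefore the system is feasible, and the resulting $x$ completes the proof.

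The step I expect to be the main obstacle is the Farkas argument of the last paragraph: one must set up the dual of exactly the right linear system, translate the stationarity conditions into statements about the message-distribution changes $q(\phi_i)-q$ of the aggregated mixed deviations $\phi_i$, and verify that an infeasibility certificate always produces either an undetectable mixed deviation (to invoke $\eta$-detectability) or a pair of players with linearly dependent deviation directions (to invoke $\eta$-identifiability), with the $\eta$-wedge landing on precisely the stage-payoff-loss terms that these hypotheses bound below. The hypothesis $\lambda\notin\{\pm e^i\}$ is used solely to guarantee that at least two coordinates of $\lambda$ are nonzero, which is what makes the ``pair'' in the identifiability step available. The extreme-point selection of $a^{*}$ and the reduction to detectable deviations are routine by comparison.
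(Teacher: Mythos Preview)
Your proposal is correct and follows essentially the same route as the paper: pick an extreme action $a^{*}\in A(G)$ maximizing $\lambda\cdot g(a)$, then establish the existence of budget-balanced transfers $x$ satisfying all $\eta$-strict incentive constraints via linear-programming duality, with the case analysis (undetectable deviations handled by $\eta$-detectability, detectable ones with nonzero dual multiplier forcing linear dependence across two players and invoking $\eta$-identifiability) matching the paper's. The only cosmetic differences are that you phrase the argument through a Farkas infeasibility certificate rather than the paper's strong-duality primal/dual pair, and that you first peel off the undetectable pure deviations before setting up the linear system, whereas the paper carries all deviations $(a_i',\rho_i')\neq(a_i^{\lambda},\rho_i^{\lambda})$ through the dual at once; neither change affects the substance.
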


\begin{proof}
Pick any $a^\lambda \in A(G)$ that solves $\max_{a \in A} \sum_i \lambda_i
g_i(a)$. By assumption, there is the same $\rho^\lambda \in R$ for which the
conditions for $\eta$-detectability and $\eta$-identifiability are satisfied
at $a^\lambda$. We show that there exists $x:M\rightarrow \mathbb{R}^{n}$
satisfying the following conditions: 
\begin{align*}
\sum_{m\in M}& x_{i}(m)\left( \tilde{p}(m|(a^{\lambda },\rho^\lambda))-%
\tilde{p}(m|(a_{i}^{\prime },\rho _{i}^{\prime }),(a_{-i}^{\lambda },\rho
_{-i}^\lambda))\right) \geq g_{i}(\alpha _{i}^{\prime },a_{-i}^{\lambda
})-g_{i}(a^{\lambda })+\eta \\
\ & \ \forall (a_{i}^{\prime },\rho _{i}^{\prime })\neq (a_{i}^{\lambda
},\rho _{i}^\lambda)\ \forall i\in N \\
\sum_{i}& \lambda _{i}x_{i}(m)=0\ \forall m\in M.
\end{align*}%
This implies $k^{\eta }(\lambda )=\sum_{i}\lambda _{i}g_{i}(a^{\lambda })$
because $((a^\lambda, \rho^\lambda), x)$ is feasible for the problem $%
(P^{\lambda, \eta})$ and achieves the upper bound $\sum_i \lambda_i
g_i(a^{\lambda }) $, hence is clearly a maximum point for $(P^{\lambda,
\eta})$. Note that every on-path deviation is a nontrivial deviation with respect
to transfer $x$ we find. The existence of such $\left(x_i\right)_{i \in N}$
is equivalent to the feasibility of the following linear programming problem
(with value $0$):

\begin{align*}
\min_{x} 0 & \\
\sum_{m \in M}& x_{i}(m)\left(\tilde{p}(m|(a^\lambda,\rho^\lambda)) -\tilde{p%
}((a_i^\prime, \rho_i^\prime),(a^\lambda_{-i},\rho _{-i}^\lambda))\right)
\geq g_i(a_i^\prime, a^\lambda_{-i}) - g_i(a^\lambda) + \eta \\
\ &\ \forall (a_i^\prime, \rho_i^\prime) \neq (a_{i}^\lambda,\rho
_{i}^\lambda) \ \forall i \in N \\
\sum_{i \in N}& \lambda_i x_i(m) = 0 \ \forall m \in M
\end{align*}

The dual problem of this problem is: 
\begin{align*}
\max_{q \geq 0, d} &\sum_{i \in N} \sum_{(a_i^\prime, \rho_i^\prime) \neq
(a_{i}^\lambda,\rho _{i}^\lambda)} \left(g_i(a_i^\prime, a^\lambda_{-i}) -
g_i(a^\lambda) +\eta\right) q_i((a_i^\prime, \rho_i^\prime)) \\
\sum_{(a_i^\prime, \rho_i^\prime) \neq (a_{i}^\lambda,\rho _{i}^\lambda)}
&\left(\tilde p(m|(a^\lambda,\rho^\lambda)) - \tilde{p}(m|(a_i^\prime,
\rho_i^\prime), (a^\lambda_{-i},\rho_{-i}^\lambda)) \right)q_i((a_i^\prime,
\rho_i^\prime)) = \lambda_i d(m)\ \forall m \in M, \ \forall i \in N
\end{align*}
where $q_i((a_i^\prime, \rho_i^\prime)) \geq 0$ is the multiplier for the
strict incentive constraint for $(a_i^\prime, \rho_i^\prime) \neq
(a_i^\lambda, \rho_i^\lambda)$ and $d(m) \in \mathbb{R}$ is the multiplier
for the $\lambda$-``budget balancing'' condition for $m \in M$.

By the strong duality theorem, the value of the primal problem is $0$ if and
only if the value of the dual problem is $0$. Take any $(q,d)$ that is
feasible for the dual problem. For each $i$, we consider two cases. First
suppose $\lambda _{i}=0$. Then the following holds for all $m\in M$: 
\begin{equation*}
\sum_{(a_{i}^{\prime }, \rho _{i}^{\prime })\neq (a_{i}^{\lambda },\rho
_{i}^\lambda)}\left( \tilde{p}(m|(a_{i}^{\prime }, \rho _{i}^{\prime
}),(a_{-i}^{\lambda },\rho _{-i}^\lambda))-\tilde{p}(m|(a^{\lambda
},\rho^\lambda ))\right) q_{i}((a_{i}^{\prime }, \rho _{i}^{\prime }))=0
\end{equation*}%
If $q_{i}((a_{i}^{\prime },\rho _{i}^{\prime }))=0$ for each $(a_{i}^{\prime
}, \rho _{i}^{\prime })\neq (a_{i}^{\lambda },\rho _{i}^\lambda)$, then the $%
i$th term of the objective function is $0$. If $q_{i}((a_{i}^{\prime },\rho
_{i}^{\prime }))\neq 0$ for some $(a_{i}^{\prime }, \rho _{i}^{\prime })\neq
(a_{i}^{\lambda },\rho _{i}^\lambda)$, then this condition is equivalent to $%
\tilde{p}(m|\phi _{i}^{\prime },(a_{-i}^\lambda,\rho _{-i}^\lambda))=\tilde{p%
}(m|(a^\lambda,\rho^\lambda ))\ \forall m\in M$, where $\phi _{i}^{\prime
}\in \Delta ((A_{i}\times R_{i}) \backslash \{(a_{i}^{\lambda },\rho
_{i}^\lambda)\})$ is defined by $\phi _{i}^{\prime }((a_{i}^{\prime }, \rho
_{i}^{\prime }))=\frac{q_{i}((a_{i}^{\prime },\rho _{i}^{\prime }))}{%
\sum_{(a_{i}^{\prime }, \rho _{i}^{\prime })\neq (a_{i}^{\lambda },\rho
_{i}^\lambda)}q_{i}((a_{i}^{\prime }, \rho _{i}^{\prime }))}$. Note that the 
$i$th term of the objective function can be written as 
\begin{equation*}
\left(\sum_{(a_{i}^{\prime }, \rho _{i}^{\prime })\neq (a_{i}^{\lambda
},\rho _{i}^\lambda)}q_{i}((a_{i}^{\prime }, \rho _{i}^{\prime }))
\right)\sum_{(a_{i}^{\prime },\rho _{i}^{\prime })\neq (a_{i}^{\lambda
},\rho _{i}^\lambda)}\left( g_{i}(\alpha _{i}(\phi _{i}^{\prime
}),a_{-i}^{\lambda })-g_{i}(a^{\lambda })+\eta \right)
\end{equation*}%
which is bounded above by $0$ by $\eta$-detectability.

Next suppose that $\lambda _{i} \neq 0.$ Then there exists $j$ such that $%
\lambda _{j}\neq 0$ since $\lambda \notin \{\pm e^{i},i\in N\}.$
Consequently, for all $m\in M$ we have 
\begin{gather*}
\sum_{(a_i^\prime, \rho_i^\prime) \neq (a_{i}^\lambda,\rho
_{i}^\lambda)}\left(\tilde{p}(m|(a^\lambda,\rho^\lambda)) - \tilde{p}%
(m|(a_i^\prime, \rho_i^\prime),(a_{-i}^\lambda,\rho _{-i}^\lambda))\right)
q_{i}((a_i^\prime, \rho_i^\prime))= \lambda_{i}d(m)\text{ } \\
\sum_{(a_j^\prime, \rho_j^\prime) \neq (a_{j}^\lambda,\rho
_{j}^\lambda)}\left(\tilde{p}(m|(a^\lambda,\rho^\lambda)) - \tilde{p}%
(m|(a_j^\prime, \rho_j^\prime), (a_{-j}^\lambda,\rho _{-j}^\lambda))\right)
q_{j}((a_j^\prime, \rho_j^\prime))= \lambda_{j}d(m)\text{ }
\end{gather*}

If $d(m)=0$ for all $m,$ then we can apply the same argument as before to
show that the $i$th and $j$th terms of the objective function are at most $0 
$. If $d(m)\neq 0,$ then $q_{i}$ is not identically 0 nor is $q_{j}$
identically 0. So we can \textquotedblleft cross multiply" the two
equalities, cancel $d(m)$ and conclude that, for all $m\in M$, 
\begin{align*}
& \left[ \frac{\lambda _{j}\left( \sum_{(a_{i}^{\prime },\rho _{i}^{\prime
})\neq (a_{i}^{\lambda },\rho _{i}^\lambda)}q_{i}((a_{i}^{\prime },\rho
_{i}^{\prime }))\right) }{\lambda _{i}\left( \sum_{(a_{j}^{\prime },\rho
_{j}^{\prime })\neq (a_{j}^{\lambda },\rho
_{j}^\lambda)}q_{j}((a_{j}^{\prime },\rho _{j}^{\prime }))\right) }\right]
\left( \tilde{p}(m|\phi _{i}^{\prime },(a_{-i}^{\lambda },\rho
_{-i}^\lambda))-\tilde{p}(m|(a^{\lambda },\rho^\lambda ))\right) \\
& =\left( \tilde{p}(m|\phi _{j}^{\prime },(a_{-j}^{\lambda },\rho
_{-j}^\lambda))-\tilde{p}(m|(a^{\lambda },\rho^\lambda ))\right)
\end{align*}%
where $\phi _{i}^{\prime }$ and $\phi _{j}^{\prime }$ are defined by $\phi
_{i}^{\prime }((a_{i}^{\prime },\rho _{i}^{\prime }))=\frac{%
q_{i}((a_{i}^{\prime },\rho _{i}^{\prime }))}{\sum_{(a_{i}^{\prime },\rho
_{i}^{\prime })\neq (a_{i}^{\lambda },\rho
_{i}^\lambda)}q_{i}((a_{i}^{\prime },\rho _{i}^{\prime }))}$ and $\phi
_{j}^{\prime }((a_{j}^{\prime },\rho _{j}^{\prime }))=\frac{%
q_{j}((a_{j}^{\prime },\rho _{j}^{\prime }))}{\sum_{(a_{j}^{\prime },\rho
_{j}^{\prime })\neq (a_{j}^{\lambda },\rho
_{j}^\lambda)}q_{j}((a_{j}^{\prime },\rho _{j}^{\prime }))}$ respectively.

Since $\tilde{p}(\cdot |\phi _{i}^{\prime },(a_{-i}^{\lambda },\rho
_{-i}^\lambda))-\tilde{p}(\cdot |(a^{\lambda },\rho^\lambda ))$ and $\tilde{p%
}(\cdot |\phi _{j}^{\prime },(a_{-j}^{\lambda },\rho _{-j}^\lambda))-\tilde{p%
}(\cdot |(a^{\lambda },\rho^\lambda))$ are not linearly independent, it
follows from $\eta$-identifiability that both \newline
$\sum_{(a_{i}^{\prime },\rho _{i}^{\prime })\neq (a_{i}^{\lambda },\rho
_{i}^\lambda)}\left( g_{i}(\alpha _{i}(\phi _{i}^{\prime }),a_{-i}^{\lambda
})-g_{i}(a^{\lambda })+\eta \right)$ and $\sum_{(a_{j}^{\prime },\rho
_{j}^{\prime })\neq (a_{j}^{\lambda },\rho _{j}^\lambda)}\left( g_{j}(\alpha
_{j}(\phi _{j}^{\prime }),a_{-j}^{\lambda })-g_{j}(a^{\lambda })+\eta
\right) $ are bounded above by $0$. This implies that the $i$th term (and
the $j$th term) of the objective function are bounded above by $0$.

Hence the $i$th term of the objective function is bounded above by $0$ in
either case for any feasible $(q,d)$, implying that the value of the dual
problem is bounded above by $0$ for any feasible $(q,d)$. Since $0$ can be
achieved by $q(\cdot )=0$ and $d(\cdot )=0$, the value of the dual problem
is exactly $0$ as we wanted to show.
\end{proof}

\bigskip

The next lemma shows that $\eta$-best response property and $\eta^*$
-detectability with respect to $i$ is sufficient to guarantee $k^\eta(e^{i})
= \max_{a} g_i(a)$ and $k^\eta(-e^{i}) = 0$.

\begin{lemma}
\label{sublemma2} Suppose that $G$ satisfies $\eta$-best response property
for $\left\{\overline{a}^i, \underline{a}^i, i \in N \right\} \subset A$.
Then the following holds for each $i \in N$.

\begin{itemize}
\item If $(G, p)$ satisfies $\eta^*$-detectability with respect to $i$ at $%
\overline{a}^i$, then $k^\eta(e^{i}) = \max_{a} g_i(a)$.

\item If $(G, p)$ satisfies $\eta^*$-detectability with respect to $i$ at $%
\underline{a}^i$, then $k^\eta(-e^{i}) = - \min_{a_{-i}}\max_{a_i} g_i(a) =
0 $.
\end{itemize}
\end{lemma}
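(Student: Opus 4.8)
\textbf{Proof proposal for Lemma \ref{sublemma2}.}

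The plan is to prove each bullet by exhibiting an explicit feasible solution to $(P^{e^i,\eta})$ (resp.\ $(P^{-e^i,\eta})$) that attains the relevant upper bound, then invoke the earlier upper bounds from Lemma \ref{bound} and the trivial bound $k^\eta(\lambda)\le\max_a\lambda\cdot g(a)$ to conclude that these values are exactly achieved. For the first bullet, take $a=\overline a^i$ and let $\rho=\rho^{\overline a^i}$ be the message profile from $\eta^*$-detectability with respect to $i$ at $\overline a^i$. For the second bullet, take $a=\underline a^i$ and $\rho=\rho^{\underline a^i}$ from $\eta^*$-detectability with respect to $i$ at $\underline a^i$; recall from the normalization that $\min_{a_{-i}}\max_{a_i}g_i(a)=0$.

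For the first bullet, I would construct the transfer $x:M\to\mathbb{R}^n$ as follows. For player $i$, set $x_i\equiv 0$: this is legitimate because $\overline a^i_i$ is an $\eta$-strict best response to $\overline a^i_{-i}$ in the stage game by part 1 of the $\eta$-best response property, so for every pure deviation $a_i'\ne\overline a^i_i$ we have $g_i(\overline a^i)-\eta\ge g_i(a_i',\overline a^i_{-i})$, and any message deviation with $a_i'=\overline a^i_i$ is trivial with respect to $x$ (since $x$ does not depend in a way that reacts to $i$'s messages once $x_i\equiv 0$ and the other $x_j$ are chosen to not depend on $m_i$); hence all nontrivial deviations of $i$ satisfy the $\eta$-strict constraint. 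For each $j\ne i$, let $x_j:M_{-i}\to\mathbb{R}$ be the function provided by Lemma \ref{Farkas2} (the $\eta^*$-detectability characterization), viewed as a function on $M$ that ignores $m_i$; this guarantees the $\eta$-strict incentive constraints for every $j\ne i$ against every $(a_j',\rho_j')\ne(a_j,\rho_j)$. The budget constraint $\sum_k(e^i)_k x_k(m)=x_i(m)\equiv 0\le 0$ holds trivially. Finally the payoff is $v=g(\overline a^i)+E[x(\cdot)|(\overline a^i,\rho)]$, so $v_i=g_i(\overline a^i)=\max_a g_i(a)$; thus $(v,(\overline a^i,\rho),x)$ is feasible for $(P^{e^i,\eta})$, giving $k^\eta(e^i)\ge\max_a g_i(a)$, and combined with the trivial upper bound this is an equality.

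For the second bullet the construction is parallel: set $x_i\equiv 0$ (justified by part 2 of the $\eta$-best response property: $\underline a^i_i$ is $\eta$-strictly optimal against $\underline a^i_{-i}$, so all nontrivial deviations of $i$ lose at least $\eta$ in the stage game alone); for $j\ne i$ take $x_j:M_{-i}\to\mathbb{R}$ from Lemma \ref{Farkas2}. The budget constraint for $\lambda=-e^i$ is $-x_i(m)\le 0$, i.e.\ $x_i(m)\ge 0$, satisfied with equality. The resulting payoff has $v_i=g_i(\underline a^i)=0=\min_{a_{-i}}\max_{a_i}g_i(a)$, so $(v,(\underline a^i,\rho),x)$ is feasible for $(P^{-e^i,\eta})$ and $k^\eta(-e^i)\ge -\underline v_i^\eta$ evaluated at this point is $\ge 0$; since Lemma \ref{bound} gives $k^\eta(-e^i)\le -\underline v_i^\eta\le 0$ (note $\underline v_i^\eta\ge 0$ always, shown in Lemma \ref{minmax}), we get $k^\eta(-e^i)=0$.

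The main subtlety — the part I would be most careful about — is the bookkeeping of which deviations count as \emph{nontrivial} with respect to the transfer $x$ I construct, and in particular making sure that setting $x_i\equiv 0$ while letting the other components ignore $m_i$ really does make all of player $i$'s ``pure-message-only'' deviations trivial, so that no $\eta$-strict constraint is imposed on them (the $\eta$-best response property only controls deviations that change $i$'s action). The key observation is that if $a_i'=\sigma^a_i(h^t)=\overline a^i_i$ (resp.\ $\underline a^i_i$) and $i$ merely changes her message strategy, then since every $x_j$ ($j\ne i$) is independent of $m_i$ and $x_i\equiv 0$, the distribution of $x(\cdot)$ is unchanged, so the deviation is trivial and imposes no constraint; meanwhile all action-changing deviations of $i$ are handled stage-game-wise by the $\eta$-best response property, and all deviations by $j\ne i$ are handled by Lemma \ref{Farkas2}. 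Once this is pinned down, the rest is the routine verification of the payoff identity $v=g(a)+E[x(\cdot)|(a,\rho)]$ and the sign of the budget constraint.
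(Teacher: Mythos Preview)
Your proposal is correct and follows essentially the same route as the paper: construct the feasible solution to $(P^{\pm e^i,\eta})$ by using the $\eta$-best response action profile, the message profile from $\eta^*$-detectability, $x_i\equiv 0$, and $x_j$ (for $j\ne i$) from Lemma~\ref{Farkas2} as a function of $M_{-i}$ only; then close with the upper bounds from Lemma~\ref{bound} and Lemma~\ref{minmax}. You are, if anything, more explicit than the paper about the key bookkeeping point---that making every $x_j$ independent of $m_i$ renders player $i$'s pure-message deviations trivial, so that the $\eta$-best response property suffices for $i$'s remaining (action-changing) constraints.
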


\begin{proof}
For $\lambda =e^{i}$, we can find $\overline{a}^{i}\in A$ such that $g_{i}(%
\overline{a}^{i})=\max_{a}g_{i}(a)$ and $g_{i}(\overline{a}%
^{i})-g_{i}(a_{i}^{\prime },\overline{a}_{-i}^{i})\geq \eta $ for any $%
a_{i}^{\prime }\neq \overline{a}_{i}^{i}$ by assumption. Let $\overline{\rho 
}^{i}\in R$ be the profile of message strategies for which the conditions
for $\eta ^{\ast }$-detectability with respect to $i$ are satisfied at $%
\overline{a}^{i}$ for any $j\neq i$. By Lemma \ref{Farkas2}, for each $j\neq
i$, there exists $x_{j}:M_{-i}\rightarrow \mathbb{R}$ such that all the $%
\eta $-strict incentive compatibility conditions are satisfied for any $%
(a_{j}^{\prime },\rho _{j}^{\prime })\neq (\overline{a}_{j}^{i},\overline{%
\rho }_{j}^{i})$. For player $i$, set $x_{i}(m)=0$ for all $m\in M$. Then
the $\eta $-strict incentive compatibility conditions for player $i$ are
satisfied for every nontrivial deviation $(a_{i}^{\prime },\rho _{i}^{\prime
})\in \hat{\Sigma}_{i}^{(\overline{a}^{i},\overline{\rho }^{i}),x}$, since $i
$'s message does not affect the transfer $x$ for any player (so deviating in
message after the equilibrium action is a trivial deviation). Since $%
\sum_{i}\lambda _{i}x_{i}(m)=0$ for each $m$  by construction, $(\overline{a}%
^{i},\overline{\rho }^{i},x)$ generates an objective function value of $%
\max_{a}g_{i}(a)$ for the problem $(P^{e^{i},\eta })$. Clearly this is the
largest possible value for $(P^{e^{i},\eta })$, hence $k^{\eta
}(e^{i})=\max_{a}g_{i}(a)$.

For $\lambda =-e^{i}$, we can find $\underline{a}^{i}\in A$ such that $g_{i}(%
\underline{a}^{i})=\min_{a_{-i}}\max_{a_{i}}g_{i}(a)=0$ and $g_{i}(%
\underline{a}^{i})-g_{i}(a_{i}^{\prime },\underline{a}_{-i}^{i})\geq \eta $
for any $a_{i}^{\prime }\neq \underline{a}_{i}^{i}$. Let $\underline{\rho }%
^{i}\in R$ be any profile of message strategies for which the conditions for 
$\eta ^{\ast }$-detectability with respect to $i$ are satisfied at $%
\underline{a}^{i}$ for any $j\neq i$. As in the previous case, we can find $%
x_{j}:M_{-i}\rightarrow \mathbb{R}$ for each $j\neq i$ such that all the $%
\eta $-strict incentive compatibility conditions are satisfied for $j$. Set $%
x_{i}(m)=0$ for all $m$ for player $i$. Since $\sum_{i}\lambda
_{i}x_{i}(m)=0 $, $(\underline{a}^{i},\underline{\rho }^{i},x)$ generates an
objective function value of $-g_{i}(\underline{a}^{i})=-\min_{a_{-i}}%
\max_{a_{i}}g_{i}(a)=0$ for $(P^{-e^{i},\eta })$.

Since $k(-e^{i})$ is bounded from above by $0$ by Lemma \ref{bound} and
Lemma \ref{minmax}, $\underline{a}^{i}$ solves $(P^{-e^{i},\eta })$. Hence $%
k^{\eta }(-e^{i})=- g(\underline{a}^{i})= -
\min_{a_{-i}}\max_{a_{i}}g_{i}(a)=0$.
\end{proof}

\bigskip

Now we complete the proof of Theorem \ref{folk}. The last two lemmas prove $%
k^{\eta }(\lambda )=\max_{a}\sum_{i}\lambda _{i}g_{i}(a)$ for any $\lambda
\notin \left\{ -e^{i},i\in N\right\} $ and $k^{\eta }(-e_{i})=0$ for every $%
i\in N$. Since $V^{\ast }(G)$ is a compact and convex set, $V^{\ast
}(G)=\bigcap_{\lambda }H^{\eta }(\lambda )=Q^{\eta }$. Then the theorem
follows from Theorem \ref{limit} when $intV^{\ast }(G)\neq \emptyset $.

\section{Discussion}

\textbf{More Strict Incentive Constraints}

\vspace{2mm}

For our uniformly strict folk theorem, we require a fixed level of strict
incentive compatibility at every public history. In terms of average payoff,
the strict incentive $(1-\delta) \eta$ converges to $0$ as $\delta
\rightarrow 1$. This means that the loss from a single deviation becomes
negligible relative to the size of the total payoff in the limit. We could
instead require $\eta$-strict incentive compatibility in terms of average
payoff. This means that the loss from a deviation is comparable to a
permanent payoff shock, say, losing \$1 in all the future periods. To do
this, we would replace $\eta$ in the definition of $\eta$-USPPE (Definition
1) with $\frac{\eta}{1-\delta}$. However, it turns out that the set of $\eta$%
-USPPE in this sense becomes empty for any $\eta > 0$ as $\delta \rightarrow
1$.

More generally, we can impose $f(\delta)$-strict incentive constraint in
terms of average payoff, where $f(\delta)$ may not converge to $0$ or
converge to $0$ more slowly than $(1-\delta)$ as $\delta \rightarrow 1$. We
can show that, for any such $f$, the set of ``$f$-uniformly strict'' PPE
becomes empty for large enough $\delta$. In this sense, our folk theorem
cannot be improved in terms of the order of the strict incentive in the
limit.

The reason for this is as follows. The effect of the current stage game payoff vanishes at the rate of $(1-\delta)$ as $\delta \rightarrow 1$ in terms of average payoff. So, if we like to provide $f(\delta)$-strict incentive with $f(\delta)$ such that $\lim_{\delta \rightarrow 1} \frac{f(\delta)}{1-\delta} \rightarrow \infty$, it must come from the variation in continuation payoffs.\footnote{For example, it is easy to see that a repetition of any strict Nash equilibrium in the stage game is not $f(\delta)$-USPPE if $\delta$ is large enough.} However, the maximum variation of continuation payoffs for player $i$ must vanish at the same rate of $(1-\delta)$ if her continuation payoff $w_i(m)$ is always at least as large as the equilibrium payoff $v$ from the present period. This is because the distance between the expected continuation payoff and the equilibrium payoff is $E[w(\cdot)|(a,\rho)] - v =\frac{1-\delta}{\delta} (v -g(a))$, which shrinks to $0$ at the rate of $1-\delta$. Hence, to provide $f(\delta)$-strict incentive, continuation payoff must be strictly less than the equilibrium payoff after some message profile, i.e., there exists $\epsilon >0$ and $m \in M$ such that $w(m) < v-\epsilon$ for any large $\delta$. However this cannot happen at every public history, hence there is no $f(\delta)$-USPPE with such $f(\delta)$ for any large enough discount factor.

\vspace{3mm}

\textbf{Folk Theorem with Double Limits}

\vspace{2mm}

We prove our folk theorem by fixing a level of strict incentive $\eta > 0$
and letting $\delta \rightarrow 1$. If we instead allow $\eta$ go to $0$ and 
$\delta$ go to $1$, then we can prove a folk theorem with weaker conditions.
When $\eta$ is small, we can construct $Q^\eta$ using the minmax action
profiles if $(G, p)$ just satisfies $\eta$-detectability instead of $\eta^*$%
- detectability at the minmax action profiles. Since $Q^\eta$ converges to $%
V^*(G)$ as $\eta \rightarrow 0$ and is the limit $\eta$-USPPE payoff set
(with full dimensionality), we can prove a version of folk theorem only with 
$\eta$-detectability (for minmax action profiles in addition to $A(G)$) and $%
\eta$-identifiability, where $\eta$ goes to $0$ and $\delta$ goes to $1$ at
the same time.

\pagebreak

\pagebreak

\section*{Appendix}

\textbf{Proof of Lemma 2}

\vspace{2mm}

\begin{proof}
This is trivial if $E^\eta(\delta)$ is an empty set, so suppose that it is
not. First, note that $E^{\eta }(\delta )$ is bounded, so we must show that $%
E^{\eta }(\delta )$ is closed. Take any $v^{\ast }\in cl(E^{\eta }(\delta )).
$ Choose a sequence $v^{k}\in E^{\eta }(\delta )$ in $\mathbb{R}^{n}$ that
converges to $v^{\ast }.$ For each $k$, let $(a^{k},\rho ^{k})\in A \times R$
be the strategy profile in the first period and $w^{k}:M\rightarrow \mathbb{R%
}^{n}$ be the continuation payoff profile from the second period of the
equilibrium strategy that supports $v^k$. Note that $w^{k}(m)\in E^{\eta
}(\delta )$ for all $m \in M$. Then for each $i\in N,$ 
\begin{equation*}
v^{k}=(1-\delta )g_{i}\left( a^{k}\right) +\delta \sum_{m\in
M}w_{i}^{k}\left( m\right) \tilde{p}\left( m|a^{k},\rho ^{k}\right)
\end{equation*}%
Since $A\times R$ is compact and $E^{\eta }(\delta )$ is bounded, we may,
extracting a subsequence if necessary, assume that $(a^{k},\rho ^{k})$ and $%
w^{k}$ are convergent with respective limits $(a^{\ast },\rho ^{\ast })$ and 
$w^{\ast }.$ Furthermore, we may assume that $(a^{k},\rho ^{k})=(a^{\ast
},\rho ^{\ast })$ for all sufficiently large $k$. Then 
\begin{equation*}
v^{\ast }=(1-\delta )g_{i}\left( a^{\ast }\right) +\delta \sum_{m\in
M}w_{i}^{\ast }\left( m\right) \tilde{p}\left( m|a^{\ast },\rho ^{\ast
}\right) .
\end{equation*}%
and for all sufficiently large k, 
\begin{align*}
& g_{i}\left( a^{\ast }\right) +\frac{\delta }{1-\delta }\sum_{m\in
M}w_{i}^{k}\left( m\right) \tilde{p}\left( m|a^{\ast },\rho ^{\ast }\right)
-\eta \geq \\
& g_{i}\left( a_{i}^{\prime },a_{-i}^{\ast }\right) +\frac{\delta }{1-\delta 
}\sum_{m\in M}w_{i}^{k}(m)\tilde{p}(m|(a_{i}^{\prime },\rho _{i}^{\prime
}),(a_{-i}^{\ast },\rho _{-i}^{\ast }))
\end{align*}%
for all $(a_{i}^{\prime },\rho _{i}^{\prime })\in \hat{\Sigma}_{i}^{(a^{\ast
},\rho ^{\ast }),w^{k}}.$ If $(a_{i}^{\prime },\rho _{i}^{\prime })\in \hat{%
\Sigma}_{i}^{(a^{\ast },\rho ^{\ast }),w^{\ast }},$ then $(a_{i}^{\prime
},\rho _{i}^{\prime })\in \hat{\Sigma}_{i}^{(a^{\ast },\rho ^{\ast }),w^{k}}$
for all sufficiently large $k$, hence in the limit 
\begin{align*}
& g_{i}\left( a^{\ast }\right) +\frac{\delta }{1-\delta }\sum_{m\in
M}w_{i}^{\ast }\left( m\right) \tilde{p}\left( m|a^{\ast },\rho ^{\ast
}\right) -\eta \geq \\
& g_{i}\left( a_{i}^{\prime },a_{-i}^{\ast }\right) +\frac{\delta }{1-\delta 
}\sum_{m\in M}w_{i}^{\ast }(m)\tilde{p}(m|(a_{i}^{\prime },\rho _{i}^{\prime
}),(a_{-i}^{\ast },\rho _{-i}^{\ast }))
\end{align*}%
for all $(a_{i}^{\prime },\rho _{i}^{\prime })\in \hat{\Sigma}_{i}^{(a^{\ast
},\rho ^{\ast }),w^{\ast }}.$ Since $w^{\ast }(m)\in cl(E^{\eta }(\delta ))$
for all $m \in M$, it follows that $v^{\ast }\in B(\delta , cl(E^{\eta
}(\delta )), \eta )$, therefore $cl(E^{\eta }(\delta ))\subseteq B(\delta
,cl(E^{\eta }(\delta )),\eta ).$ Since $cl(E^{\eta }(\delta ))$ is bounded
(in fact compact), by $\eta$-self decomposability (Lemma\ref{rec}), we can
conclude that $cl(E^{\eta }(\delta ))\subseteq E^{\eta }(\delta ),$ i.e., $%
E^{\eta }(\delta )$ is closed.
\end{proof}

\end{document}